\documentclass[11pt,a4paper]{article}
\usepackage{style}
\usepackage{shortcuts}

\graphicspath{{./figures/} }

\newcommand\todo[1]{}

\title{Faster 0-1-Knapsack via Near-Convex Min-Plus-Convolution}
\author{%
    Karl Bringmann\footnote{Saarland University and Max Planck Institute for Informatics, Saarland Informatics Campus. This work is part of the project TIPEA that has received funding from the European Research Council (ERC) under the European Unions Horizon 2020 research and innovation programme (grant agreement No.~850979). Email: \url{bringmann@cs.uni-saarland.de}.}\and%
    Alejandro Cassis\footnote{Saarland University and Max Planck Institute for Informatics, Saarland Informatics Campus. This work is part of the project TIPEA that has received funding from the European Research Council (ERC) under the European Unions Horizon 2020 research and innovation programme (grant agreement No.~850979). Email: \url{acassis@cs.uni-saarland.de}.}}

\begin{document}
\maketitle

\begin{abstract}
We revisit the classic 0-1-Knapsack problem, in which we are given $n$ items with their weights and profits as well as a weight budget $W$, and the goal is to find a subset of items of total weight at most~$W$ that maximizes the total profit.
We study pseudopolynomial-time algorithms parameterized by the largest profit of any item $p_{\max}$, and the largest weight of any item $w_{\max}$.
Our main result are algorithms for 0-1-Knapsack running in time $\widetilde{O}(n\,\wmax\,\pmax^{2/3})$ and $\widetilde{O}(n\,\pmax\,\wmax^{2/3})$, improving upon an algorithm in time $O(n\,\pmax\,\wmax)$ by Pisinger [J. Algorithms '99].
In the regime $\pmax \approx \wmax \approx n$ (and $W \approx \OPT \approx n^2$) our algorithms are the first to break the cubic barrier $n^3$.

To obtain our result, we give an efficient algorithm to compute the min-plus convolution of \emph{near-convex} functions.
More precisely, we say that a function $f \colon [n] \mapsto \Int$ is $\Delta$-near convex with $\Delta \geq 1$, if there is a convex function $\breve{f}$ such that $\breve{f}(i) \leq f(i) \leq \breve{f}(i) + \Delta$ for every $i$.
We design an algorithm computing the min-plus convolution of two $\Delta$-near convex functions in time $\widetilde{O}(n\Delta)$.
This tool can replace the usage of the \emph{prediction technique} of Bateni, Hajiaghayi, Seddighin and Stein [STOC '18] in all applications we are aware of, and we believe it has wider applicability.

\end{abstract}

\thispagestyle{empty}
\setcounter{page}{0}

\newpage

\section{Introduction}

In the 0-1-Knapsack problem, we are given a set of $n$ items $\calI = \set{(p_1,w_1),\dots,(p_n,w_n)}$, where item $i$ has a profit $p_i \in \Nat$ and a weight $w_i \in \Nat$, as well as a weight budget $W \in \Nat$.
The goal is to compute $\OPT := \max \sum_{i=1}^n p_i x_i$ subject to the contraints $\sum_{i=1}^n w_i x_i \leq W$ and $x \in \set{0,1}^n$.
This classic and fundamental problem in computer science and operations research has been studied for decades (see e.g. \cite{KellererPP04} for a book on the topic and related problems).
Knapsack is weakly NP-hard, and the textbook dynamic programming algorithm due to Bellman~\cite{Bellman57} solves it in time $O(n \cdot \min\set{W, \OPT})$.

Recent works have studied the fine-grained complexity of Knapsack and related problems, where the goal is to give best-possible pseudopolynomial-time algorithms with respect to different parameters, see~\cref{tab:running-times} and~\cite{Bringmann17,JinW19,BringmannW21,AxiotisBBJNTW21,ChanH22,DengMZ23,Klein22,EisenbrandW18,JansenR19}.
In this work we study the complexity of 0-1-Knapsack in terms of two natural parameters: the largest weight among the items denoted by $\wmax$, and the largest profit denoted by $\pmax$.
Note that we can assume without loss of generality that $\wmax \le W$ and $\pmax \le \OPT$.
Therefore, a small polynomial dependence on these parameters can lead to faster algorithms compared to the standard dynamic programming algorithm on certain instances.

This parameterization has been studied by several previous works, see~\cref{tab:running-times}.
To compare these running times, note that since any feasible solution includes at most all items, we can assume without loss of generality that $W \leq n\cdot\wmax$ and $\OPT \leq n\cdot \pmax$.
Note that when $\pmax \approx \wmax \approx n$ (and $W \approx \OPT \approx n^2$), all known algorithms require time $\Omega(n^3)$.
In particular, in this regime the algorithm in time $O(n\,\wmax\,\pmax)$ of Pisinger from '99~\cite{Pisinger99} is still the best known.
In this paper we \emph{overcome this cubic barrier}:

\begin{theorem}\label{thm:alg-knapsack}
	There is a randomized algorithm for 0-1-Knapsack that runs in time\footnote{We use $\Ohtilde(\cdot)$ to supress polylogarithmic factors in the input size and the largest input number.} \[\Ohtilde((\pmax W)^{2/3} (n \wmax)^{1/3} + n\wmax)\] and succeeds with high probability.
    Using the bound $W \leq n\wmax$, this running time is at most $\Ohtilde(n\,\wmax\,\pmax^{2/3})$.
\end{theorem}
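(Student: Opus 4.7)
The plan is to prove Theorem~\ref{thm:alg-knapsack} by combining a batched divide-and-conquer over the items with the near-convex min-plus convolution tool described earlier.

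The first ingredient is the structural property that for every subset $I' \subseteq \calI$, the Knapsack DP function
\[
T_{I'}[w] \;=\; \max\set{\sum_{i \in S} p_i \;:\; S \subseteq I',\; \sum_{i \in S} w_i \leq w}
\]
is $\pmax$-near-concave; equivalently $-T_{I'}$ is $\pmax$-near-convex in the sense of the definition stated in the abstract. This follows from standard LP-proximity: the upper concave envelope of $T_{I'}$ equals the value of the fractional Knapsack relaxation on~$I'$, and any integer solution differs from the LP by at most $\pmax$ because every basic LP-solution has at most one fractional variable, whose profit is at most~$\pmax$. Importantly, this property is preserved under max-plus convolution, since for disjoint $I',I''$ the function $T_{I'\cup I''}$ is exactly the max-plus convolution of $T_{I'}$ and $T_{I''}$, so the near-convex tool remains applicable at every merge step.

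Given the structural property, the algorithmic template is: partition $\calI$ into $n/B$ batches of $B$ items each for a parameter~$B$ to be set, compute every batch's DP function $g_i$ directly in time $O(B\cdot B\wmax)$ for a total preparation cost of $O(nB\wmax)$, and then combine the $g_i$'s through a balanced binary tree of max-plus convolutions, invoking the near-convex tool with $\Delta=\pmax$ at every internal node and truncating each intermediate function to the window $[0,W]$.

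The main obstacle is matching the stated bound $\Ohtilde((\pmax W)^{2/3}(n\wmax)^{1/3}+n\wmax)$ rather than the easier $\Ohtilde(\sqrt{n\wmax\pmax W})$ that falls out of the plain two-way balance $nB\wmax+\pmax W/B$. I expect the tight analysis to require an additional component --- plausibly a secondary split of the items along a weight or profit threshold, with the few ``heavy'' items handled by a direct DP and the remaining ``light'' items flowing through the near-convex merge tree --- whose overall cost scales as $n\wmax B^2+\pmax W/B$ and therefore balances to the claimed bound at $B=(\pmax W/(n\wmax))^{1/3}$. The additive $\Ohtilde(n\wmax)$ absorbs the base cost of materialising the DP table of size at most $\min(W,n\wmax)$ and of the preparatory sweeps over the items.
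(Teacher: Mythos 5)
Your structural ingredient is exactly the paper's: the per-group profit function is $\pmax$-near concave (via the fractional greedy/LP relaxation, one fractional item of profit at most $\pmax$), and near-concavity is preserved under max-plus convolution, so the near-convex convolution tool applies at every merge node. But the central algorithmic idea that actually yields the exponent $2/3$ is missing. The paper partitions the items into $q$ groups \emph{uniformly at random} and uses Bernstein's inequality to argue that, for any fixed solution $x$ with $w_\calI(x)\in\range{W-\wmax}{W}$, the weight of $x$ restricted to each level-$\ell$ node concentrates in a window $J^\ell$ of length $\Ohtilde(\sqrt{\wmax W\,2^\ell/q})$ around $W2^\ell/q$. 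This has two consequences your deterministic batching does not capture: (i) each base-case DP only needs to be run up to weight about $W/q$, costing $O(nW/q)$ in total rather than $O(nB\wmax)$; and (ii) each merge at level $\ell$ convolves sequences of length only $\Ohtilde(\sqrt{\wmax W\,2^\ell/q})$, so the merge tree costs $\Ohtilde(\pmax\sqrt{q\,\wmax W})$ overall. Balancing $nW/q$ against $\pmax\sqrt{q\,\wmax W}$ gives $q=(n/\pmax)^{2/3}(W/\wmax)^{1/3}$ and the claimed bound.

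Without the window shrinking, your scheme cannot reach the target: truncating intermediate functions to $[0,W]$ makes each level of the merge tree cost $\Theta(n\pmax\wmax)$ (at depth $d$ from the leaves there are $n/(B2^d)$ nodes, each convolving sequences of length $\min(W,B2^d\wmax)$ at cost $\pmax$ per unit length), so the merge tree alone is $\Ohtilde(n\pmax\wmax)$ for every choice of $B$ --- no better than Pisinger. Your guessed ``additional component'' (a heavy/light weight threshold with cost $n\wmax B^2+\pmax W/B$) is reverse-engineered from the answer and is not substantiated; the actual missing component is the random partition plus concentration, which is also why the theorem is stated for a \emph{randomized} algorithm succeeding with high probability. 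A secondary point: the paper must also handle solutions of weight below $W-\wmax$ (it reduces to the case $w_\calI(x^*)\in\range{W-\wmax}{W}$ by monotonicity of the computed arrays), a correctness issue your windowed variant would likewise need to address.
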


Symmetrically, we obtain the following:

\begin{theorem}\label{thm:alg-knapsack-symmetric}
	There is a randomized algorithm for 0-1-Knapsack that runs in time \[\Ohtilde((\wmax \OPT)^{2/3} (n \pmax)^{1/3} + n\pmax)\] and succeeds with high probability.
    Using the bound $\OPT \leq n\pmax$, this running time is at most $\Ohtilde(n\,\pmax\,\wmax^{2/3})$.
\end{theorem}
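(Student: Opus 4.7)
The plan is to obtain Theorem~\ref{thm:alg-knapsack-symmetric} by re-running the algorithm of Theorem~\ref{thm:alg-knapsack} on the \emph{dual} dynamic program for 0-1-Knapsack. Whereas the primal DP computes $g(j) = \max\{\sum_i p_i x_i : \sum_i w_i x_i \le j,\ x \in \{0,1\}^n\}$ for $j \in \{0,\dots,W\}$ and returns $g(W) = \OPT$, the dual DP computes $h(k) = \min\{\sum_i w_i x_i : \sum_i p_i x_i \ge k,\ x \in \{0,1\}^n\}$ for $k \in \{0,\dots,\OPT\}$ and returns $\max\{k : h(k) \le W\} = \OPT$. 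The dual DP is combinatorially identical to the primal, except that profits take the role of weights and the profit-budget $\OPT$ takes the role of the weight-budget $W$.

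I would argue that the entire algorithmic toolkit behind Theorem~\ref{thm:alg-knapsack} --- reducing the DP updates to near-convex min-plus convolutions and invoking the $\Ohtilde(n\Delta)$ subroutine for such convolutions --- transfers verbatim after the swaps $(p,w)\leftrightarrow(w,p)$ and $W\leftrightarrow\OPT$. The tables $h(\cdot)$ indexed by profit should be $O(\pmax)$-near-convex by the symmetric argument that makes $g(\cdot)$ indexed by weight $O(\wmax)$-near-convex, and carrying through the runtime analysis with swapped parameters turns the bound of Theorem~\ref{thm:alg-knapsack} into the desired $\Ohtilde((\wmax\,\OPT)^{2/3}(n\pmax)^{1/3} + n\pmax)$. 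The main obstacle here is confirming that no step of the algorithm implicitly uses a weight-specific property --- for instance, a preprocessing that relies on $W$ being given rather than to-be-determined, or a monotonicity direction that inverts incorrectly; because Knapsack is symmetric under the swap (up to exchanging $\max/\le$ with $\min/\ge$, which is precisely what the dual DP already does) and min-plus convolution is a generic tool, this should be a matter of bookkeeping rather than a new idea.

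The remaining issue is that, unlike $W$, the value $\OPT$ is not given as input. I would wrap the dual algorithm in a doubling procedure that tries $\OPT' \in \{1, 2, 4, \dots, n\pmax\}$, runs the dual algorithm with profit-budget $\OPT'$, and returns the largest $k \le \OPT'$ with $h(k) \le W$. Since the running time is monotone in $\OPT'$ and there are only $O(\log(n\pmax))$ guesses, the total cost is dominated by the run with $\OPT' \in [\OPT, 2\OPT)$, and the logarithmic overhead is absorbed by $\Ohtilde(\cdot)$. A cleaner alternative is to first compute a constant-factor approximation of $\OPT$ via the standard $O(n\log n)$ greedy algorithm and plug that in directly as $\OPT'$.
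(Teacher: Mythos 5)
Your overall approach is the same as the paper's: dualize the DP so that profits play the role of weights (working with the minimum-weight-for-profit table instead of the maximum-profit-for-weight table), observe that these tables are $O(\wmax)$-near convex by the symmetric greedy/fractional-relaxation argument, and rerun the algorithm of \cref{thm:alg-knapsack} with $(p,w)$ and $(W,\OPT)$ swapped, invoking \cref{thm:convex-minplus} directly rather than its near-concave corollary. That part is fine and is exactly what the paper does.

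The gap is in your handling of the unknown profit budget. Both of your proposed fixes --- doubling over $\OPT' \in \{1,2,4,\dots\}$ and a constant-factor greedy approximation --- only guarantee a \emph{multiplicative} $O(1)$ approximation of $\OPT$, and that is not enough. The correctness of the recursive scheme hinges on the true optimal solution lying inside the computed window $J^\ell$ at every level. In the primal proof this is where the bound $w_\calI(x^*) \in \range{W-\wmax}{W}$ enters: the drift term $\tfrac{2^\ell}{q}\,|w_\calI(x^*) - W| \le \wmax 2^\ell/q \le \sqrt{\Delta 2^\ell}$ is absorbed into the window width. In the dual, the analogous drift is $\tfrac{2^\ell}{q}\,|\OPT - \OPT'|$, and at the top level ($2^\ell = q$) the window half-width is only $\Ohtilde(\sqrt{\pmax \OPT'})$. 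So you need $|\OPT - \OPT'| \le \Ohtilde(\sqrt{\pmax\OPT'})$, whereas a factor-2 guess allows $|\OPT - \OPT'| = \Theta(\OPT)$, which exceeds the window whenever $\OPT \gg \pmax$; the optimum then falls outside every computed subarray and the algorithm returns garbage (and doubling does not let you binary-search your way down, since a run with a bad guess gives no clean signal about where $\OPT$ lies). The fix is standard and is what the paper uses: the greedy solution together with the LP relaxation yields in linear time a value $\tilde V$ with $\tilde V - \pmax \le \OPT \le \tilde V$, i.e., an \emph{additive} $\pmax$ approximation, which makes the drift term at most $\pmax 2^\ell / q \le \sqrt{\Delta 2^\ell}$ exactly as in the primal. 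With that replacement your argument goes through; one then reads off $\OPT$ as the largest $i \in \range{\tilde V - \pmax}{\tilde V}$ whose table entry is at most $W$.
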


\setlength{\tabcolsep}{10pt}
\bgroup
\def\arraystretch{1.3}
\begin{table}[ht]
    \caption{Pseudopolynomial-time algorithms for 0-1 Knapsack.} \label{tab:running-times}
    \centering
    \begin{tabular}{|ll|}
        \hline
        \thead{Reference} & \thead{Running Time} \\
        \hline
        Bellman~\cite{Bellman57} & $O(n \cdot \min\{W, \OPT\})$  \\
        Pisinger~\cite{Pisinger99} & $O(n \cdot \pmax \cdot \wmax)$   \\
        Kellerer and Pferschy~\cite{KellererP04}, also \cite{BateniHSS18,AxiotisT19} & $\Ohtilde(n + \wmax \cdot W)$   \\
        Bateni, Hajiaghayi, Seddighin and Stein~\cite{BateniHSS18} & $\Ohtilde(n + \pmax \cdot W)$  \\
        Axiotis and Tzamos~\cite{AxiotisT19} & $\Ohtilde(n \cdot \min\set{\wmax^2, \pmax^2})$ \\
        Bateni, Hajiaghayi, Seddighin and Stein~\cite{BateniHSS18} & $\Ohtilde((n + W) \cdot \min\set{\wmax, \pmax})$ \\
        Polak, Rohwedder and Węgrzycki~\cite{PolakRW21} & $O(n + \min\set{\wmax^3, \pmax^3})$  \\
        Bringmann and Cassis~\cite{BringmannC22} & $\Ohtilde(n + (W + \OPT)^{1.5})$  \\
        \textbf{\Cref{thm:alg-knapsack}}  & $\Ohtilde(n \cdot \wmax \cdot \pmax^{2/3})$  \\
        \textbf{\Cref{thm:alg-knapsack-symmetric}} & $\Ohtilde(n \cdot \pmax \cdot \wmax^{2/3})$  \\
        \hline
    \end{tabular}
\end{table}
\egroup

\paragraph{Min-Plus Convolution}

Given functions $f, g \colon \rangezero{n} \mapsto \Int$, their min-plus convolution is the function $h \colon \rangezero{2n} \mapsto \Int$ defined as $h(x) = \min_{x'} f(x') + g(x - x')$ for $x \in \rangezero{2n}$.
This can be trivially computed in time $O(n^2)$, and the best known algorithm for it runs in time $n^2/2^{\Omega(\sqrt{\log n})}$~\cite{BremnerCDEHILPT14,Williams18,ChanW21}.
The lack of faster algorithms has led to the \emph{Min-Plus Convolution Hypothesis}, which postulates that there is no truly subquadratic algorithm for this problem~\cite{CyganMWW19,KunnemannPS17}.
Despite this hypothesis, there are structured instances of min-plus convolution that can be solved faster~\cite{AggarwalKMSW87,BateniHSS18,BussieckHWZ94,ChanL15,ChiDX022}.
These improvements have been \emph{key to obtain the Knapsack algorithms listed in~\cref{tab:running-times}} (the only exception being Bellman's and Pisinger's algorithms~\cite{Bellman57,Pisinger99}):
\begin{itemize}
    \item When one of the functions is convex, their min-plus convolution can be computed in time $O(n)$ using the SMAWK algorithm~\cite{AggarwalKMSW87}.
    This has been used for Knapsack indirectly\footnote{Kellerer and Pferschy did not use SMAWK, but gave a different algorithm for computing the min-plus convolution of these instances in time $O(n\log n)$.} by Kellerer and Pferschy~\cite{KellererP04}, and explicitly by Axiotis and Tzamos~\cite{AxiotisT19} and Polak et al.~\cite{PolakRW21}.
    \item When the functions are monotone and have bounded entries, their min-plus convolution can be computed in time $\Ohtilde(n^{1.5})$ by an algorithm due to Chi et al.~\cite{ChiDX022}.
    This has been used for Knapsack by Bringmann and Cassis~\cite{BringmannC22}.
    \item Bateni et al.~\cite{BateniHSS18} introduced the \emph{prediction technique} to show that the min-plus convolution of certain instances arising from Knapsack can be computed efficiently.
    More precisely, let $h$ be the min-plus convolution of two given functions $f, g \colon \rangezero{n} \mapsto \Int$.
    They show that if one is given $n$ intervals $\range{x_i}{y_i}$ for $i \in \rangezero{n}$ satisfying
    (i) $|h(i+j) - (f(i) + g(j))| \leq \Delta$ for every $i \in \rangezero{n}$ and $j \in \range{x_i}{y_i}$,
    (ii) for every output $h(k)$ there exists at least one $i$ such that $f(i) + g(k-i) = h(k)$ and $k-i \in \range{x_i}{y_i}$
    and (iii) $0 \leq x_i, y_i <  n$ for all intervals and $x_i \leq x_j, y_i \leq y_j$ for all $i < j$; then $h$ can be computed in time $\Ohtilde(n\cdot \Delta)$.
    They showed that this is applicable in the context of Knapsack.
\end{itemize}

Our \cref{thm:alg-knapsack,thm:alg-knapsack-symmetric} fall into the same category of improvements, as we design an efficient algorithm for a new class of structured instances of min-plus convolution, namely \emph{near convex} functions:
We say that $f \colon \rangezero{n} \mapsto \Int$ is $\Delta$-near convex, if there is a convex function $\cvx{f} \colon \rangezero{n} \mapsto \Rat$ such that $\cvx{f}(i) \leq f(i) \leq \cvx{f}(i) + \Delta$ for all $i \in \rangezero{n}$.
Our theorem reads as follows:

\begin{restatable}[Near Convex MinPlus Convolution]{theorem}{thmconvexminplus} \label{thm:convex-minplus}
    Let $f \colon \rangezero{n} \mapsto \range{-U}{U}$, and $g \colon \rangezero{m} \mapsto \range{-U}{U}$ be given as inputs where $n,m,U \in \Nat$.
    Let $\Delta \geq 1$ such that both $f$ and $g$ are $\Delta$-near convex.
    Then the min-plus convolution of $f$ and $g$ can be computed in time $\Ohtilde((n+m) \cdot \Delta)$.
\end{restatable}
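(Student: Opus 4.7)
The plan is to reduce the computation of $h := f \oplus g$ to the prediction technique of~\cite{BateniHSS18}, using the convex parts of $f$ and $g$ as a scaffold. First, compute the lower convex hulls $\breve{f}$ and $\breve{g}$ of $f$ and $g$ in $O(n+m)$ time by standard techniques; since $f$ and $g$ are $\Delta$-near convex, the residuals $r_f := f - \breve{f}$ and $r_g := g - \breve{g}$ take values in $[0, \Delta]$. Next, compute $\breve{h} := \breve{f} \oplus \breve{g}$ via the SMAWK algorithm in $O(n+m)$ time; the envelope $\breve{h}(k) \leq h(k) \leq \breve{h}(k) + 2\Delta$ then follows from $0 \leq r_f, r_g \leq \Delta$.

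For each $i \in [0, n]$, define the interval
\[
[x_i, y_i] := \{ j \in [0, m] : \delta_i(j) \leq 2\Delta \}, \qquad \delta_i(j) := \breve{f}(i) + \breve{g}(j) - \breve{h}(i+j) \geq 0,
\]
and feed $f, g$ together with these intervals to the prediction technique with error parameter $\Delta' = 4\Delta$. Conditions (i) and (ii) of the technique are immediate: for $j \in [x_i, y_i]$ we have $0 \leq f(i) + g(j) - h(i+j) \leq \delta_i(j) + r_f(i) + r_g(j) \leq 4\Delta$, proving (i); and for any minimizer $(i^*, j^*)$ of $h(k)$, the bound $\delta_{i^*}(j^*) \leq h(k) - \breve{h}(k) \leq 2\Delta$ (using $r_f, r_g \geq 0$ together with the envelope) gives $j^* \in [x_{i^*}, y_{i^*}]$, proving (ii). Invoking the prediction technique then outputs $h$ in the claimed time $\Ohtilde((n+m) \cdot 4\Delta) = \Ohtilde((n+m) \Delta)$.

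The main obstacle is verifying condition (iii): the intervals must satisfy $x_i \leq x_{i+1}$ and $y_i \leq y_{i+1}$. The key observation is that the slopes of $\breve{h}$ form the sorted merge of the slopes of $\breve{f}$ and $\breve{g}$; writing $a_i$ for the $i$-th slope of $\breve{f}$ and $s_k$ for the $k$-th slope of $\breve{h}$, a direct calculation gives $\delta_{i+1}(j) - \delta_i(j) = a_i - s_{i+j}$. Let $j_i^*$ denote the position of $a_i$ in the merged slope sequence minus $i$; then $(i, j_i^*)$ uses precisely the $i + j_i^*$ smallest slopes in the merge and hence is an optimal pair for $\breve{h}(i + j_i^*)$, so $\delta_i(j_i^*) = 0$ and $x_i \leq j_i^* \leq y_i$. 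The sorted-merge property further gives $s_{i+j} \geq a_i$ for $j \geq j_i^*$, whence $\delta_{i+1}(y_i) \leq \delta_i(y_i) \leq 2\Delta$, so $y_{i+1} \geq y_i$; the argument for $x_i \leq x_{i+1}$ is symmetric, using $s_{i+j} \leq a_i$ for $j < j_i^*$. With monotonicity in hand, the intervals can be computed in total $O(n+m)$ time by a two-pointer sweep, completing the reduction.
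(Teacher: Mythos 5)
Your proof takes a genuinely different route from the paper. The paper deliberately avoids the prediction technique and builds a self-contained algorithm: it identifies the region $R_{2\Delta}$ of points $(i,j)$ with $\cvx{f}(i)+\cvx{g}(j)\le\cvx{h}(i+j)+2\Delta$, shows this region is sandwiched between two monotone lattice paths, covers it by $O(n/s)$ dyadic boxes per scale $s$, proves that inside each box $f$ and $g$ are within $O(\Delta)$ of linear functions with a \emph{common} slope (so the point sumset $\set{(i,f(i))}+\set{(j,g(j))}$ has size $O(\Delta s)$), and computes these sumsets with output-sensitive sparse convolution. You instead reduce to the prediction technique of Bateni et al.\ as a black box, using the same quantity $\delta_i(j)=\cvx{f}(i)+\cvx{g}(j)-\cvx{h}(i+j)$ to define the intervals and the sorted-merge characterization of slopes of $\minconv{\cvx{f}}{\cvx{g}}$ to verify monotonicity of the interval endpoints. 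Your verifications of conditions (i) and (ii) are correct, and the identity $\delta_{i+1}(j)-\delta_i(j)=a_i-s_{i+j}$ together with the merge argument does give $x_i\le x_{i+1}$ and $y_i\le y_{i+1}$. What the paper's approach buys is independence from the prediction technique (which is precisely the tool the authors set out to replace, and whose statement is delicate); what yours buys is brevity, since the covering, the near-linearity lemma, and the sparse-sumset machinery are all outsourced.

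One step in your argument needs justification: you define $\range{x_i}{y_i}$ as the sublevel set $\set{j : \delta_i(j)\le 2\Delta}$ and implicitly assume it is an interval. This is not automatic — for fixed $i$, $\delta_i(j)$ is a convex function minus a convex function of $j$ — and it is needed, because your verification of condition (i) uses $\delta_i(j)\le 2\Delta$ for \emph{every} $j\in\range{x_i}{y_i}$; if the sublevel set had gaps, condition (i) could fail inside them (and your two-pointer sweep would also be unjustified). Fortunately the claim is true and provable with exactly the machinery you already deploy: writing $\delta_i(j+1)-\delta_i(j)=b_j-s_{i+j}$ and comparing $i+j$ with the position of $b_j$ in the merged slope sequence shows that $\delta_i(\cdot)$ is non-increasing up to some index and non-decreasing afterwards, so its sublevel sets are intervals. (This is the row-wise analogue of the paper's Lemmas on monotone paths and on $R_\delta$ being bounded by them, which the paper proves via convexity along diagonals.) With that lemma added, and modulo the usual caveats about ties in the merge and about padding $f,g$ to equal length for the prediction technique's interface, your reduction is sound.
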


We view our~\cref{thm:convex-minplus} as a replacement for the prediction technique by Bateni et al.~\cite{BateniHSS18}. Indeed, all uses of the prediction technique exploit near-convexity to ensure its preconditions, and thus all uses that we are aware of can be replaced by our~\cref{thm:convex-minplus}. Since the prediction technique is both difficult to state and difficult to apply, we view our~\cref{thm:convex-minplus} as replacing the prediction technique by an \emph{easily applicable tool with a concise statement}.
Moreover, \cref{thm:convex-minplus} provides a new tool for structured instances of min-plus convolution, which we use in this paper to make progress on 0-1-Knapsack, and which we believe has wider applicability.

\paragraph{Our Techniques}
Our approach to prove~\cref{thm:convex-minplus} is as follows.
Let $f, g \colon \rangezero{n} \mapsto \Int$ be the input functions, and let $h$ be their min-plus convolution, which we aim to compute.
First we observe that we can obtain the convex approximations $\cvx{f}, \cvx{g}$ witnessing the $\Delta$-near convexity of $f$ and $g$,
and compute their min-plus convolution $\cvx{h}$ efficiently.
By exploiting $\cvx{h}$ and the convexity of $\cvx{f}$ and  $\cvx{g}$, we identify a structured set $R \subseteq \rangezero{n}^2$
with the property that any $(i, j) \in \rangezero{n}^2 \setminus R$ satisfies $f(i) + g(j) > h(j)$.
Then, we give a simple recursive algorithm to cover $R$ with a collection $\mathcal{C}$ of disjoint dyadic boxes $I \times J$ , where $(I, J) \in \mathcal{C}$ satisfies  $I, J \subseteq \rangezero{n}$ and $I \times J \subseteq R$.
Thus, we can infer $h$ by computing the \emph{sumset} $A := \set{(i, f(i)) \mid i \in I} + \set{(j, g(j)) \mid j \in J}$ and taking $h(k) = \min\set{y \mid (k, y) \in A}$ for every $(I, J) \in \mathcal{C}$.
To do this efficiently we observe that inside $I$ and $J$, the functions $f[I]$ and $g[J]$ are close to linear functions with the same slope up to an additive error of $\pm O(\Delta)$ (which follows from their $\Delta$-near convexity).
This implies that their sumset is small; more precisely it has size $O((|I|+|J|)\Delta)$.
Finally, we make use of known tools that can compute a sumset in time proportional to its size.
The idea of identifying a covering with small sumsets to efficiently compute the min-plus convolution is inspired by Chan and Lewenstein's~\cite{ChanL15}  algorithm for bounded monotone sequences (in which they do not use convexity in any form).
Our algorithm shares some similarities with the prediction technique by Bateni et al.~\cite{BateniHSS18}.
In particular, the covering by dyadic boxes where functions are near-linear resembles the way in which they exploit the intervals $\range{x_i}{y_i}$ required by their algorithm.

To obtain~\cref{thm:alg-knapsack,thm:alg-knapsack-symmetric}, we follow the \emph{partition and convolve} paradigm that has been used in many recent algorithms for Subset Sum and Knapsack, see e.g.~\cite{Bringmann17,BateniHSS18,BringmannC22,JansenR19,Chan18a,KoiliarisX19,DengJM23,BringmannN21}.
Specifically, we randomly split the items into $q$ groups. In each group, we use the standard dynamic programming algorithm to compute for each weight $i$ the maximum profit $f(i)$ attainable with weight at most $i$ using items from that group. Then we combine the functions $f$ over all groups by min-plus convolution. The crucial observation is that due to the random splitting we only need to compute the values $f(i)$ in a small weight interval. %

\paragraph{Further Related Work} \label{sec:related-work}

Cygan et al.~\cite{CyganMWW19} and K\"{u}nnemann et al.~\cite{KunnemannPS17} showed that under the Min-Plus Convolution Hypothesis,
there is no truly subquadratic algorithm for Knapsack on instances with $\wmax, W = \Theta(n)$ and $\pmax,\OPT = \Omega(n^2)$,
and symmetrically, on instances with $\pmax, \OPT = \Theta(n)$ and $\wmax, W = \Omega(n^2)$.
This implies that Bellman's dynamic programming algorithm is conditionally optimal in these settings.

Pseudopolynomial-time algorithms parameterized by $\pmax$ and $\wmax$ have also been studied for the closely related Unbounded Knapsack problem.
Here, the setup is the same as for 0-1 Knapsack but now a solution might include an arbitrary number of copies of each item.
Chan and He~\cite{ChanH22} gave an algorithm for this problem in time $\Ohtilde(n \cdot \min\set{\pmax, \wmax})$, which is optimal under the Min-Plus Convolution Hypothesis.
Bringmann and Cassis~\cite{BringmannC22} gave an algorithm in time $\Ohtilde(n + (\pmax + \wmax)^{1.5})$ which is better when $\wmax \approx \pmax \approx n$.

\paragraph{Outline} The paper is organized as follows. In~\cref{sec:preliminaries} we give some formal preliminaries and establish some notation. In~\cref{sec:knapsack-algorithm} we give our algorithm for Knapsack proving~\cref{thm:alg-knapsack,thm:alg-knapsack-symmetric}, assuming~\cref{thm:convex-minplus}. In~\cref{sec:minplus-algorithm} we will then give our algorithm for min-plus convolution, proving~\cref{thm:convex-minplus}.

\section{Preliminaries}\label{sec:preliminaries}

We write $\Nat = \set{0,1,2,\dots}$. For $t \in \Nat$, we define $\rangezero{t} := \set{0,1,\dots,t}$.
Let $A \in \Int^{n+1}$ be an integer sequence, i.e., $A[i] \in \Int$ for $i \in \rangezero{n}$.
Sometimes we will refer to such a sequence as a \emph{function} $A \colon \rangezero{n} \mapsto \Int$.
With this in mind, we use the notation $-A$ to denote the entry-wise negation of $A$.
Given $a, b \in \Real$ with $a \leq b$, we define $\range{a}{b} := \set{\max(0, \floor{a}), \max(0, \floor{a})+1,\dots,\ceil{b}-1,\ceil{b}}$.
The non-standard rounding and capping at 0 in the definition of $\range{a}{b}$
is useful to index a subsequence $A\range{a}{b}$ when $a$ and $b$ might not be non-negative integers.

The max-plus convolution of two sequences $A\range{0}{n} \in \Int^{n+1}, B\range{0}{m} \in \Int^{m+1}$, denoted by $\maxconv{A}{B}$,
is a sequence of length $n+m+1$ where for each $k \in \rangezero{n+m}$ we have $\maxconv{A}{B}[k] := \max_{i + j = k}A[i] + B[j]$.
The min-plus convolution $\minconv{A}{B}$ is defined analogously, but replacing max by a min.
Note that by negating the entries of the sequences, these two operations are equivalent.

\begin{fact}\label{fact:equiv-convs}
    For any $A \in \Int^{n+1}, B \in \Int^{m+1}$, we have $\maxconv{A}{B} = -\minconv{-A}{-B}$.
\end{fact}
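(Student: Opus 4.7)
The plan is to verify the identity $\maxconv{A}{B}[k] = -\minconv{-A}{-B}[k]$ entry by entry for each index $k \in \rangezero{n+m}$; since both sides are sequences of length $n+m+1$ with matching index set, this will give the claim. The only ingredient I need is the elementary duality $\max_{x \in S}\phi(x) = -\min_{x \in S}(-\phi(x))$ for a nonempty finite set $S$ and any $\phi \colon S \to \Int$, together with the fact that entry-wise negation distributes over addition on $\Int$.

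Fixing $k$, I would unfold both sides using the definitions of $\maxconv{\cdot}{\cdot}$ and $\minconv{\cdot}{\cdot}$ from~\cref{sec:preliminaries}. The left-hand side is $\max_{i+j=k}(A[i] + B[j])$, where $(i,j)$ ranges over pairs with $i \in \rangezero{n}$ and $j \in \rangezero{m}$. The right-hand side is $-\min_{i+j=k}((-A)[i] + (-B)[j])$, with indices ranging over the same set. Since $(-A)[i] + (-B)[j] = -(A[i] + B[j])$ by definition of entry-wise negation, the max/min duality applied to the right-hand side rewrites it as $\max_{i+j=k}(A[i] + B[j])$, matching the left. As $k$ was arbitrary, the sequence identity follows.

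There is no substantive obstacle here: the statement is a direct restatement of the classical $\max$/$\min$ duality lifted to convolutions, and its sole role in the paper is to let the min-plus convolution algorithm of~\cref{thm:convex-minplus} be invoked on max-plus convolution instances by pre- and post-composing with sign flips (which preserves $\Delta$-near convexity up to exchanging convex with concave and adjusting signs). I expect the written proof to fit in one or two displayed equations with a short justification.
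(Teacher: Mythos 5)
Your proof is correct and is exactly the routine verification the paper has in mind — the paper states this as a \emph{Fact} without any proof, and your entry-wise unfolding via the $\max$/$\min$ duality $\max_x \phi(x) = -\min_x(-\phi(x))$ is the canonical justification. Nothing is missing.
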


We will use the following handy notation: Given sequences $A\range{0}{n}, B\range{0}{n}$ and
intervals $I, J \subseteq [n]$ and $K \subseteq [2n]$, we denote by $C[K] := \maxconv{A[I]}{B[J]}$ the computation of
$C[k] := \max\{A[i] + B[j] \colon i \in I, j \in J, i + j = k\}$ for each $k \in K$.

We say that a function $f \colon \rangezero{n} \mapsto \Rat$ is \emph{convex} if $f(i) - f(i-1) \leq f(i+1) - f(i)$ holds for every $i \in \range{1}{n-1}$.
We say that $f$ is \emph{concave} if $-f$ is convex.

\begin{definition}[Near Convex and Near Concave Functions]\label{def:nearconvex}
    For $\Delta \geq 0$, we say that a function $f \colon\rangezero{n} \mapsto \Int$ is $\Delta$-near convex,
    if there is a convex function $\cvx{f} \colon \rangezero{n} \mapsto \Rat$ such that $\cvx{f}(i) \leq f(i) \leq \cvx{f}(i) + \Delta$.
    We say that $f$ is $\Delta$-near concave if $-f$ is $\Delta$-near convex.
\end{definition}

If the input consists of $N$ numbers in $\range{-U}{U}$, we denote $\Ohtilde(T) = \bigcup_{c \geq 0} O(T \log^c(NU))$.
\section{Faster 0-1 Knapsack Algorithm}\label{sec:knapsack-algorithm}
In this section we prove~\cref{thm:alg-knapsack}.
Let $(\calI, W)$ be a 0-1 Knapsack instance. Throughout, we denote the number of items by $n := |\calI|$.
We identify the item set $\calI$ with $\set{1,\dots,n}$.
We represent a \emph{solution} to the knapsack instance (i.e., a subset of $\calI$), by an indicator vector $x \in \set{0,1}^n$.
For a subset of the items $\calJ \subseteq \calI$, we put $w_\calJ(x) := \sum_{i \in \calJ}w_i x_i$ and $p_\calJ(x) := \sum_{i \in \calJ} p_i x_i$.
We define the profit sequence $\Profit{\calI}[\cdot]$, where for each $j \in \Nat$ we have
\[
	\Profit{\calI}[j] = \max\set{p_{\calI}(x) \mid x \in \set{0,1}^n, w_{\calI}(x) \leq j}.
\]
Observe that $\Profit{\calI}$ is monotone non-decreasing, and that $\OPT = \Profit{\calI}[W]$.
The textbook way to compute $\Profit{\calI}\range{0}{j}$ is to use dynamic programming:

\begin{fact}\label{fact:dp}
    For any $j \in \Nat$ the sequence $\Profit{\calI}\range{0}{j}$ can be computed in time $O(nj)$.
\end{fact}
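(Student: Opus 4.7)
The plan is to use the standard Bellman-style dynamic programming approach for 0-1 Knapsack restricted to the weight range $\rangezero{j}$. I would maintain a two-dimensional table $P[i][w]$ for $i \in \rangezero{n}$ and $w \in \rangezero{j}$, where $P[i][w]$ denotes the maximum profit attainable by a subset of the first $i$ items $\set{1, \dots, i}$ of total weight at most $w$. The base case $P[0][w] = 0$ records the empty solution. For the inductive step, any feasible subset of $\set{1, \dots, i}$ either contains item $i$ or does not, so I would set $P[i][w] = \max(P[i-1][w],\, P[i-1][w - w_i] + p_i)$ whenever $w \geq w_i$, and $P[i][w] = P[i-1][w]$ otherwise.

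A routine induction on $i$ would then confirm $P[n][w] = \Profit{\calI}[w]$ for every $w \in \rangezero{j}$, after which reading off the final row directly yields the desired sequence $\Profit{\calI}\range{0}{j}$.

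For the running time, each of the $(n+1)(j+1) = O(nj)$ entries is filled in by a constant number of lookups into the previous row, giving an overall $O(nj)$ bound. Space can be reduced to $O(j)$ by overwriting a single array in decreasing order of $w$, but that refinement is not needed for the stated fact.

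I do not expect any real obstacle, since this is a textbook result. The only minor subtlety is bookkeeping to avoid a negative index into $P[i-1][\cdot]$ when $w < w_i$, which is dispatched by the case split in the recurrence; everything else reduces to verifying the standard subproblem structure.
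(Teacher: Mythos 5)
Your proposal is correct and is exactly the textbook Bellman dynamic program that the paper invokes without proof for this fact (the paper simply states it as the standard DP). No gaps; the recurrence, the induction on $i$, and the $O(nj)$ bound all match the intended argument.
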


Before presenting the algorithm, we make two simple observations about the given Knapsack instance $(\calI, W)$.
First, by ignoring items with weight larger than the capacity $W$, we can assume without loss of generality that $\wmax \leq W$.
Now every single item is a feasible solution, so we have $\pmax \leq \OPT$.
Second, observe that if $W \geq n \cdot \wmax$, then the instance is trivial since we can pack all items.
Thus, we can assume without loss of generality that $W \leq n \cdot \wmax$.
Moreover, since any feasible solution consists of at most all the $n$ items, it follows that $\OPT \leq n \cdot \pmax$.

\subsection*{The Algorithm}

We now describe the algorithm.
Set parameters $q :=\min\set{(n/\pmax)^{2/3}(W/\wmax)^{1/3}, W/\wmax}$ rounded down to the closest power of 2, $\Delta := \wmax W /q$ and $\eta := 11 \log n$.
For each $\ell \in \rangezero{\log q}$ we define the interval $J^\ell := \range{\tfrac{W}{q}2^\ell - \sqrt{\Delta 2^\ell}\eta}{\tfrac{W}{q}2^\ell + \sqrt{\Delta 2^\ell}\eta}$.

We start by splitting the items $\calI$ into $q$ groups $\calI^0_1,\dots,\calI^0_q$ uniformly at random.
The idea will be to compute an array $C_j^0$ associated to each $\calI_j^0$, and then combine them in a \emph{tree-like} fashion.
A crucial aspect for the running time is that we only compute $|J^\ell|$ entries of each array $C_j^\ell$.
In detail, we proceed as follows:

\begin{description}
    \item[Base Case] For each $\calI^0_j$, we use~\cref{fact:dp} to compute
    $\Profit{\calI^0_j}\range{0}{\tfrac{W}{q} + \sqrt{\Delta}\eta}$ and define the subarray
    $C_j^0[J^0] := \Profit{\calI^0_j}[J^0]$.
    \item[Combination] Iterate over the \emph{levels} $\ell = 1,\dots q$.
    For $j \in \range{1}{q/2^\ell}$ we set $\calI^\ell_j := \calI^{\ell-1}_{2j-1} \cup \calI^{\ell-1}_{2j}$.
    Then, compute the subarray $C_j^{\ell}[J^\ell]$ by taking the relevant entries of the max-plus convolution of $C_{2j-1}^{\ell-1}[J^{\ell-1}]$ and $C_{2j}^{\ell-1}[J^{\ell-1}]$.
    \item[Returning the answer] (Note that when $\ell = \log(q)$, it holds that $\calI^{\log q}_1 = \calI$.)
    We return the value $C_1^{\log q}[W]$.
    See~\cref{alg:knapsack} for the pseudocode.
\end{description}

\begin{algorithm}[ht]
    \caption{Knapsack Algorithm. Given a set of items $\calI$ and a weight budget $W$, the algorithm computes the maximum attainable profit.}\label{alg:knapsack}
    \begin{algorithmic}[1]
    \State $q \gets \min\set{(n/\pmax)^{2/3}(W/\wmax)^{1/3}, W/\wmax}$ rounded down to the closest power of 2
    \State $\Delta \gets \wmax W / q$
    \State $\eta \gets 11 \log n$
    \State $\calI_1^0, \dots, \calI_q^0 \gets$ random partitioning of $\calI$ into $q$ groups
    \smallskip
    \For{$i = 1\dots q$}\label{alg:knapsack:line:for-loop-basecase}
        \State Compute $\Profit{\calI^0_j}\range{0}{\tfrac{W}{q} + \sqrt{\Delta}\eta}$ using standard dynamic programming (\cref{fact:dp}) \label{alg:knapsack:line:base-dp}
        \State $J^0 \gets \range{\tfrac{W}{q} - \sqrt{\Delta}\eta}{\tfrac{W}{q} + \sqrt{\Delta}\eta}$ \label{alg:knapsack:line:range-base}
        \State $C_j^0[J^0] \gets \Profit{\calI^0_j}[J^0]$ \label{alg:knapsack:line:set-base}
    \EndFor
    \smallskip
    \For{$\ell = 1\dots\log(q)$}\label{alg:knapsack:line:for-combining}
        \State $J^\ell \gets \range{\tfrac{W}{q}2^\ell - \sqrt{\Delta 2^\ell}\eta}{\tfrac{W}{q}2^\ell + \sqrt{\Delta 2^\ell}\eta}$
        \For{$j = 1,\dots,q/2^\ell$}
            \State $\calI_j^\ell \gets \calI_{2j-1}^{\ell-1} \cup \calI_{2j}^{\ell-1}$
            \State Compute $C^\ell_j[J^\ell] \gets \maxconv{C^{\ell-1}_{2j-1}[J^{\ell-1}]}{C^{\ell-1}_{2j}[J^{\ell-1}]}$ using~\cref{thm:convex-minplus} \label{alg:knapsack:line:prediction}
        \EndFor
    \EndFor
    \smallskip
    \State \Return $C^{\log q}_1 [W]$\label{alg:knapsack:line:return}
\end{algorithmic}
\end{algorithm}

\subsection*{Correctness}
We start by analyzing the correctness of the algorithm.
The following lemma shows that the weight of any solution restricted to one of the sets $\calI^\ell_j$ is concentrated around its expectation.

\begin{restatable}[Concentration]{lemma}{lemconcentration}\label{lem:concentration}
    Let $x \in \set{0,1}^n$ be a solution to the given Knapsack instance.
    Fix a level $\ell \in \range{0}{\log q}$ and $j \in \range{1}{q/2^\ell}$.
    Then, with probability at least $1 - 1/n^4$ it holds that:
    \[
        \left|w_{\calI^\ell_j}(x) - \frac{w_{\calI}(x) \cdot 2^\ell}{q}\right| \leq \sqrt{\Delta 2^\ell} \cdot 10 \log n.
    \]
\end{restatable}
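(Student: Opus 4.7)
The plan is to view $w_{\calI_j^\ell}(x)$ as a sum of independent bounded random variables and apply a Bernstein-type tail bound. Since the groups $\calI_1^0,\dots,\calI_q^0$ form a uniformly random partition and $\calI_j^\ell$ is the union of exactly $2^\ell$ of them, each item $i$ belongs to $\calI_j^\ell$ independently with probability $p := 2^\ell / q$. Thus, writing $Z_i := \mathbb{1}[i \in \calI_j^\ell]$, we have
\[
    w_{\calI_j^\ell}(x) \;=\; \sum_{i\,\colon x_i = 1} w_i Z_i,
\]
where the $Z_i$ are independent Bernoulli$(p)$. In particular $\mathbb{E}[w_{\calI_j^\ell}(x)] = p \cdot w_\calI(x) = w_\calI(x)\cdot 2^\ell/q$, which matches the quantity the lemma asks us to deviate from.

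Each summand $w_i Z_i$ lies in $[0,\wmax]$, and the total variance is
\[
    \sum_{i\,\colon x_i = 1} w_i^2\, p(1-p) \;\leq\; p\cdot\wmax\cdot w_\calI(x) \;\leq\; \wmax W \cdot \tfrac{2^\ell}{q} \;=\; \Delta 2^\ell.
\]
Setting $t := \sqrt{\Delta 2^\ell}\cdot 10 \log n$, Bernstein's inequality then yields
\[
    \Pr\!\left[\,\bigl|w_{\calI_j^\ell}(x) - \mathbb{E}[w_{\calI_j^\ell}(x)]\bigr| \geq t\,\right] \;\leq\; 2\exp\!\left(-\frac{t^2/2}{\Delta 2^\ell + \wmax\, t/3}\right).
\]

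The main point to verify is that the exponent in the bound above is $\Omega(\log n)$. Here the parameter choice $q \leq W/\wmax$ made by the algorithm is crucial: it guarantees $\Delta = \wmax W/q \geq \wmax^2$, and so $\sqrt{\Delta 2^\ell} \geq \wmax$ for every level $\ell$. I would split the analysis into the two Bernstein regimes. If $\Delta 2^\ell \geq \wmax t/3$ (the variance-dominated regime), the denominator is at most $2\Delta 2^\ell$ and the exponent is $\Omega(\log^2 n)$. Otherwise the denominator is at most $\tfrac{2}{3}\wmax t$, and a direct calculation gives exponent $\Omega(\log n \cdot \sqrt{\Delta 2^\ell}/\wmax)$, which is $\Omega(\log n)$ by the bound $\sqrt{\Delta 2^\ell} \geq \wmax$. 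Choosing the leading constant $10$ inside $t$ large enough makes the exponent exceed $5\ln n$ in both cases, yielding the desired $n^{-4}$ failure probability.

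The main obstacle is the second regime: without the restriction $q \leq W/\wmax$ the ratio $\sqrt{\Delta 2^\ell}/\wmax$ could be vanishingly small and Bernstein would fail to deliver high probability at the bottom levels $\ell$. The entire proof therefore hinges on the algorithm's choice of $q$ being clipped by $W/\wmax$; everything else is a routine application of a concentration inequality for independent bounded variables.
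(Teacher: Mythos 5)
Your proposal is correct and follows essentially the same route as the paper: model membership of each item in $\calI_j^\ell$ as an independent Bernoulli$(2^\ell/q)$ indicator, bound the variance by $p\,\wmax W = \Delta 2^\ell$, apply Bernstein's inequality with $t = \sqrt{\Delta 2^\ell}\cdot 10\log n$, and use the clipping $q \le W/\wmax$ (equivalently $p \ge \wmax/W$, i.e.\ $\sqrt{\Delta 2^\ell} \ge \wmax$) to control the linear term in the exponent. The paper phrases the two Bernstein regimes as a single $\min\{t^2/(4\Var(Z)),\, t/(2\wmax)\}$ bound, but the content is identical to your case split.
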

\begin{proof}
    Recall that the item set $\calI$ is partitioned randomly into $\calI^0_1,\dots,\calI^0_q$.
    Thus, observe that $\calI^\ell_j$ is a random subset of $\calI$, where each item is included with probability $p := 2^\ell/q$.
    For $i \in \range{1}{n}$, let $Z_i$ be a random variable which equals $w_i \cdot x_i$ with probability $p$, and $0$ with probability $1 - p$.
    Then, observe that $w_{\calI^\ell_j}(x)$ is distributed as $Z := \sum_{i=1}^n Z_i$, and therefore, $\Ex(Z) = w_{\calI}(x) p$.

    To prove the statement, we will use Bernstein's inequality (see e.g. \cite[Theorem 1.2]{DubhashiP09}) which states that
    \begin{align}
        \Pr(|Z - \Ex(Z)| \geq t)
            &\leq 2\exp\left(-\frac{t^2}{2\Var(Z) + \tfrac{2}{3} t \cdot \wmax}\right) \nonumber \\
            &\leq 2\exp\left(-\min\left\{\frac{t^2}{4 \Var(Z)}, \frac{t}{2\wmax}\right\}\right). \label{proof:concentration:eqn:1}
    \end{align}
    Set $t := \sqrt{p \cdot \wmax W} \cdot 10 \log n$.
    We first bound $t^2/(4 \Var(Z))$.
    Note that we can give an upper bound on the variance as follows:
    \[
        \Var(Z) = \sum_{i=1}^n p(1-p)w_i^2 x_i^2 \leq p \cdot \wmax \sum_{i=1}^n w_i x_i = p \cdot \wmax w_{\calI}(x) \leq p \cdot \wmax W.
    \]
    Therefore $t^2 / (4 \Var(Z)) \geq 10 \log n$.
    Next, we bound $t/(2\wmax)$. Using that $q \leq W/\wmax$, we have that $p = \tfrac{2^\ell}{q} \geq \tfrac{\wmax 2^\ell}{W} \geq \tfrac{\wmax}{W}$.
    Thus,
    \[
        \frac{t}{2\wmax} = \frac{\sqrt{p \cdot \wmax W} \cdot 10 \log n}{2\wmax} \geq 5 \log n.
    \]
    Combining the above, we obtain from~\eqref{proof:concentration:eqn:1} that
    \[
        |w_{\calI^\ell_j}(x) - w_\calI(x)2^\ell/q| = |Z - \Ex(Z)| \leq t = \sqrt{p \wmax W} \cdot 10\log n = \sqrt{\Delta 2^\ell} \cdot 10\log n
    \]
    holds with probability at least $1 - 2/n^5 \geq 1 - 1/n^{4}$.
\end{proof}

Using \cref{lem:concentration}, we can argue that at level $\ell$ it suffices to compute a subarray of length $\Ohtilde(\sqrt{\Delta 2^\ell})$ around $W2^\ell/q$.
The following lemma makes this precise:

\begin{restatable}{lemma}{lemcorrectness}\label{lem:correctness-1}
    Let $x \in \set{0,1}^n$ be a solution to the given Knapsack instance satisfying $w_\calI(x) \in \range{W - \wmax}{W}$.
    With probability at least $1 - 1/n^2$, for all levels $\ell \in \range{0}{\log q}$ and all $j \in \range{1}{q/2^\ell}$ it holds that:
    \begin{itemize}
        \item $w_{\calI^\ell_j}(x) \in J^\ell = \range{\tfrac{W}{q}2^\ell - \sqrt{\Delta 2^\ell}\eta}{\tfrac{W}{q}2^\ell + \sqrt{\Delta 2^\ell}\eta}$, and
        \item $C^\ell_j[w_{\calI^\ell_j}(x)] \geq p_{\calI^\ell_j}(x)$.
    \end{itemize}
\end{restatable}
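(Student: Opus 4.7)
The plan is to establish the two bullets separately: the first follows from \cref{lem:concentration} plus a union bound, and the second follows by induction on the level $\ell$, using the first bullet to guarantee that the witness indices we rely on lie in the computed range $J^{\ell-1}$.

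For the first bullet, I would apply \cref{lem:concentration} once for each pair $(\ell, j)$ with $\ell \in \rangezero{\log q}$ and $j \in \range{1}{q/2^\ell}$. There are at most $q \log q \leq n \log n$ such pairs, so a union bound yields that with probability at least $1 - (n \log n)/n^4 \geq 1 - 1/n^2$, simultaneously for every pair,
\[
    \left| w_{\calI^\ell_j}(x) - \tfrac{w_\calI(x) \cdot 2^\ell}{q} \right| \leq \sqrt{\Delta 2^\ell} \cdot 10 \log n.
\]
Combined with the hypothesis $w_\calI(x) \in \range{W - \wmax}{W}$, which gives $|w_\calI(x) \cdot 2^\ell/q - W 2^\ell/q| \leq \wmax \cdot 2^\ell /q$, it suffices to check $\wmax \cdot 2^\ell / q \leq \sqrt{\Delta 2^\ell}$; this is equivalent to $\wmax \cdot 2^\ell \leq W \cdot q$, which follows from $2^\ell \leq q$ and $\wmax \leq W$. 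Thus the total deviation is at most $(10 \log n + 1)\sqrt{\Delta 2^\ell} \leq \eta \sqrt{\Delta 2^\ell}$, placing $w_{\calI^\ell_j}(x)$ inside $J^\ell$ as required.

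For the second bullet, I would argue by induction on $\ell$. The base case $\ell = 0$ is immediate from the definition of $\Profit{\calI^0_j}$ in \cref{fact:dp}: since $w_{\calI^0_j}(x) \in J^0$ by the first bullet, $C^0_j[w_{\calI^0_j}(x)] = \Profit{\calI^0_j}[w_{\calI^0_j}(x)] \geq p_{\calI^0_j}(x)$ holds by definition of $\Profit{\calI^0_j}$. For the inductive step, fix $\ell \geq 1$ and $j$, and let $k := w_{\calI^\ell_j}(x)$, $k_1 := w_{\calI^{\ell-1}_{2j-1}}(x)$, $k_2 := w_{\calI^{\ell-1}_{2j}}(x)$, so that $k = k_1 + k_2$ and, by the first bullet, $k \in J^\ell$ and $k_1, k_2 \in J^{\ell-1}$. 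Since $C^\ell_j[J^\ell]$ is obtained as $\maxconv{C^{\ell-1}_{2j-1}[J^{\ell-1}]}{C^{\ell-1}_{2j}[J^{\ell-1}]}$, taking the particular split $(k_1, k_2)$ gives
\[
    C^\ell_j[k] \geq C^{\ell-1}_{2j-1}[k_1] + C^{\ell-1}_{2j}[k_2] \geq p_{\calI^{\ell-1}_{2j-1}}(x) + p_{\calI^{\ell-1}_{2j}}(x) = p_{\calI^\ell_j}(x),
\]
where the middle inequality is the inductive hypothesis.

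The main subtlety is ensuring that the witness indices $k_1, k_2$ used in the inductive step actually belong to the restricted ranges $J^{\ell-1}$ over which the subarrays $C^{\ell-1}_{2j-1}, C^{\ell-1}_{2j}$ have been computed; this is exactly what the first bullet provides, and this is the reason the radius $\eta \sqrt{\Delta 2^\ell}$ of $J^\ell$ has been chosen slightly larger than the concentration bound of \cref{lem:concentration}, to absorb the $\wmax \cdot 2^\ell/q$ gap between $w_\calI(x) \cdot 2^\ell/q$ and $W \cdot 2^\ell/q$. Once the union bound is set up carefully and these ranges are verified consistent across levels, the rest is routine.
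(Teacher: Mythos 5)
Your proposal is correct and follows essentially the same route as the paper: a union bound over all $(\ell,j)$ pairs applied to \cref{lem:concentration}, the triangle-inequality argument with the bound $\wmax 2^\ell/q \leq \sqrt{\Delta 2^\ell}$ absorbing the gap between $w_\calI(x)$ and $W$, and induction on $\ell$ for the second bullet using the first bullet to place the witness split inside $J^{\ell-1}$. No gaps.
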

\begin{proof}
    By~\cref{lem:concentration}, for each $\ell \in \range{0}{\log q}$ and $j \in \range{1}{q/2^\ell}$ it holds that
    \begin{equation}\label{proof:correctness:eqn:1}
        |w_{\calI_j^\ell}(x) - w_\calI(x)2^\ell/q| \leq \sqrt{\Delta 2^\ell} \cdot 10\log n
    \end{equation}
    with probability at least $1 - 1/n^4$.
    Note that $q \leq W/\wmax \leq n$.
    Thus, we can afford a union bound and conclude that \eqref{proof:correctness:eqn:1} holds \emph{for all} $\ell \in \range{0}{\log q}$ and $j \in \range{1}{q/2^\ell}$ with probability at least $1 - 1/n^2$.
    From now on, we condition on this event.

    We start by showing the first item of the statement.
    Fix $\ell \in \rangezero{\log q}$ and $j \in \range{1}{q/2^\ell}$.
    By \eqref{proof:correctness:eqn:1}, it holds that $|w_{\calI_j^\ell}(x) - w_\calI(x)2^\ell/q| \leq \sqrt{\Delta 2^\ell} \cdot 10 \log n$.
    By assumption, we have that $w_\calI(x) \in \range{W - \wmax}{W}$. Hence,
    \begin{align*}
        |w_{\calI^\ell_j}(x) - W2^\ell/q|
            &\leq  |w_{\calI^\ell_j}(x) - w_\calI(x)2^\ell/q| + \tfrac{2^\ell}{q}|w_\calI(x) - W| \\
            &\leq \sqrt{\Delta 2^\ell} 10\log n + \wmax 2^\ell/q \leq \sqrt{\Delta 2^\ell} \cdot 11\log n.
    \end{align*}
    The last inequality holds since we can use that $2^\ell \leq q$ and $\wmax \leq W$  to obtain that $\wmax 2^\ell/q \leq \sqrt{\wmax W} \cdot \sqrt{2^\ell/q} = \sqrt{\Delta 2^\ell}$.
    Since $\eta = 11\log n$, this implies that $w_{\calI^\ell_j}(x) \in J^\ell = \range{\tfrac{W}{q}2^\ell - \sqrt{\Delta2^\ell}\eta}{\tfrac{W}{q}2^\ell + \sqrt{\Delta2^\ell}\eta}$. This concludes the proof of the first item.

    Next, we prove the second item of the lemma by induction. Consider the base case $\ell = 0$.
    By the first item, for any $j \in \range{1}{q}$ we have that $w_{\calI^0_j}(x) \in J^0$.
    In particular, it holds that $C_j^0[w_{\calI^0_j}(x)] = \Profit{\calI^0_j}[w_{\calI^0_j}(x)]$ (see~\cref{alg:knapsack:line:set-base}).
    Then, since $\Profit{\calI^0_j}[i]$ is the maximum profit of a subset of items from $\calI^0_j$ of weight at most $i$, it holds that $\Profit{\calI^0_j}[w_{\calI^0_j}(x)] \geq  p_{\calI^0_j}(x)$, which completes the proof of the base case.

    Now we proceed with the inductive step: Fix $\ell \ge 1$ and assume that
    $C_j^{\ell-1}[w_{\calI_j^{\ell-1}}(x)] \geq p_{\calI_j^{\ell-1}}(x)$
    hold for all $j \in \range{1}{q/2^{\ell-1}}$.
    By the first item of the lemma, for each $j \in \range{1}{q/2^\ell}$ we have that $w_{\calI_j^\ell}(x) \in J^\ell$.
    Thus, by the computation of~\cref{alg:knapsack:line:prediction}, it holds that
    \begin{align*}
        C_j^\ell[w_{\calI_j^\ell}(x)]
            &= \max \set{C_{2j-1}^{\ell-1}[i] + C_{2j}^{\ell}[i'] \colon i,i' \in J^{\ell-1},\, i+i' = w_{\calI_j^\ell}(x)} \\
            &\geq C_{2j-1}^{\ell-1}[w_{\calI_{2j-1}^{\ell-1}}(x)] + C_{2j}^{\ell}[w_{\calI_{2j}^{\ell-1}}(x)] \\
            &\geq p_{\calI_{2j-1}^{\ell-1}}(x) + p_{\calI_{2j}^{\ell-1}}(x)
            = p_{\calI_j^\ell}(x).
    \end{align*}
    In the second step, we used that $w_{\calI_{2j-1}^{\ell-1}}(x), w_{\calI_{2j}^{\ell-1}}(x) \in J^{\ell-1}$ as shown earlier.
    The third step follows from the induction hypothesis.
    The last equality holds since $\calI_j^\ell = \calI_{2j-1}^{\ell-1} \cup \calI_{2j}^{\ell-1}$.
\end{proof}

\begin{lemma}[Correctness of~\cref{alg:knapsack}]\label{lem:correctness-2}
    Let $x^* \in \set{0,1}^n$ be an optimal solution to the given Knapsack instance.
    Then, for every $i \in \range{w_{\calI}(x^*)}{W}$, it holds that $C_1^{\log q}[i] = \Profit{\calI}[i]$ with probability at least $1 - 1/n^2$.
\end{lemma}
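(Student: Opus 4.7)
The plan is to combine \cref{lem:correctness-1} (which pins down $C_j^\ell$ at a single point per level) with a monotonicity argument to extend the lower bound to the full range $\range{w_\calI(x^*)}{W}$, and with an upper bound $C_j^\ell \le \Profit{\calI_j^\ell}$ to obtain equality. Since $\Profit{\calI}$ is monotone non-decreasing and $\Profit{\calI}[W] = \OPT$, we have $\Profit{\calI}[i] = \OPT$ for every $i \in \range{w_\calI(x^*)}{W}$, so the goal reduces to showing $C_1^{\log q}[i] = \OPT$ throughout this range. I would first reduce to the case $w_\calI(x^*) \in \range{W - \wmax}{W}$: as long as $w_\calI(x^*) < W - \wmax$ and not all items are in $x^*$, any missing item $i$ satisfies $w_i \le \wmax < W - w_\calI(x^*)$ and can be added without violating the budget or decreasing the profit (since $p_i \ge 0$); iterating either reaches the desired weight range or includes all items, which is a trivial instance.

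Conditioning on the good event of \cref{lem:correctness-1} applied to $x^*$ (probability $\ge 1 - 1/n^2$), we obtain $w_{\calI_j^\ell}(x^*) \in J^\ell$ and $C_j^\ell[w_{\calI_j^\ell}(x^*)] \ge p_{\calI_j^\ell}(x^*)$ for every $\ell$ and $j$; at $\ell = \log q$ and $j = 1$ this yields $C_1^{\log q}[w_\calI(x^*)] \ge \OPT$. To propagate this lower bound to every $i \in \range{w_\calI(x^*)}{W}$, the main step is an inductive proof on $\ell$ that $C_j^\ell$ is non-decreasing on $J^\ell$. The base case $C_j^0 = \Profit{\calI_j^0}$ is immediate. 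For the inductive step, let $w, w+1 \in J^\ell$ and let $(i, i') \in (J^{\ell-1})^2$ with $i + i' = w$ attain $C_j^\ell[w]$. The half-widths $\sqrt{\Delta 2^\ell}\,\eta$ scale by exactly $\sqrt{2}$ per level, so
\[
    2 \cdot \max J^{\ell-1} = \tfrac{W}{q}2^\ell + \sqrt{2}\cdot\sqrt{\Delta 2^\ell}\,\eta \;>\; \tfrac{W}{q}2^\ell + \sqrt{\Delta 2^\ell}\,\eta = \max J^\ell \;\ge\; w,
\]
which rules out $i = i' = \max J^{\ell-1}$. Hence at least one of $(i+1, i')$ or $(i, i'+1)$ still lies in $(J^{\ell-1})^2$, and shifting to that witness, combined with the inductive monotonicity at level $\ell - 1$, gives $C_j^\ell[w+1] \ge C_j^\ell[w]$. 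Since $\wmax \le \sqrt{\wmax W}\,\eta = \sqrt{\Delta q}\,\eta$ (using $W \ge \wmax$ and $\eta \ge 1$), the interval $\range{w_\calI(x^*)}{W}$ is contained in $J^{\log q}$, and monotonicity propagates $C_1^{\log q}[i] \ge \OPT$ to the whole range.

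The matching upper bound is a second straightforward induction on $\ell$: $C_j^\ell[i] \le \Profit{\calI_j^\ell}[i]$ for every $i \in J^\ell$. The base case is by definition. In the inductive step, any witness $(i_1, i_2) \in (J^{\ell-1})^2$ with $i_1 + i_2 = i$ yields disjoint subsets of $\calI_{2j-1}^{\ell-1}$ and $\calI_{2j}^{\ell-1}$ whose combined weight is at most $i$ and whose combined profit is bounded, via the inductive hypothesis, by $\Profit{\calI_{2j-1}^{\ell-1}}[i_1] + \Profit{\calI_{2j}^{\ell-1}}[i_2] \le \Profit{\calI_j^\ell}[i]$, the last inequality being the standard split identity for disjoint item sets. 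Taking $\ell = \log q$ and combining with the lower bound completes the proof: $C_1^{\log q}[i] = \OPT = \Profit{\calI}[i]$ for every $i \in \range{w_\calI(x^*)}{W}$.

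The main obstacle I anticipate is the monotonicity step: a restricted max-plus convolution is not automatically monotone, and the argument hinges on the exact $\sqrt{2}$-scaling of the window widths, which is precisely what forces $2 \max J^{\ell-1} > \max J^\ell$ and prevents both convolution witnesses from being simultaneously stuck at the upper boundary of $J^{\ell-1}$.
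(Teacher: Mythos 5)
Your proof is correct and follows essentially the same route as the paper: a lower bound at the single index $w_{\calI}(x^*)$ obtained from \cref{lem:correctness-1}, propagated over $\range{w_{\calI}(x^*)}{W} \subseteq J^{\log q}$ by monotonicity of $C_1^{\log q}$, and matched against the upper bound $C_1^{\log q}[i] \leq \Profit{\calI}[i]$. Your treatment of the monotonicity step is in fact more careful than the paper's one-line assertion that the max-plus convolution of monotone sequences is monotone: you correctly note that the convolution here is windowed to $J^{\ell-1}$ and use the $\sqrt{2}$-scaling of the window widths (so that $2\max J^{\ell-1} > \max J^{\ell}$) to guarantee a witness can always be shifted, which is exactly the detail needed to make the induction go through.
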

\begin{proof}
    We can check in linear time $O(n)$ whether the optimal solution consists of all items, in which case the instance is trivial.
    Thus, we can assume without loss of generality that $x^*$ does not include all items.
    In particular, $x^*$ leaves at least one item out and therefore its weight satisfies $w_\calI(x^*) \in \range{W - \wmax}{W}$.
    By~\cref{lem:correctness-1}, it holds that $C_1^{\log q}[w_\calI(x^*)] \geq p_\calI(x^*) = \Profit{\calI}[w_\calI(x^*)]$ with probability at least $1 - 1/n^2$.
    From now on we condition on this event.
    We will use the following auxiliary claim:

    \begin{claim}\label{proof:correctness-1:claim:1}
        The sequence $C_1^{\log q}[J^{\log q}]$ is monotone non-decreasing, and satisfies $C_1^{\log q}[i] \leq \Profit{\calI}[i]$ for all $i \in J^{\log q}$.
    \end{claim}
    \begin{claimproof}
        First we argue monotonicity by induction.
        Note that in the base case $\ell = 0$, the sequence $C_j^0[J^0] = \Profit{\calI_j^0}[J^0]$ is monotone non-decreasing due to the definition of $\Profit{\calI_j^0}$.
        For level $\ell > 0$, the sequence $C_j^\ell$ is computed by taking the max-plus convolution of sequences of level $\ell - 1$.
        The result follows by observing that the max-plus convolution of two monotone non-decreasing sequences is monotone non-decreasing.

        The second part of the claim follows since (inductively) every entry $C_1^{\log q}[i]$ for $i \in J^{\log q}$ corresponds to the profit of a subset of items of $\calI$ of weight at most $i$.
    \end{claimproof}

    Since $x^*$ is an optimal solution, it holds that $\Profit{\calI}[i] = p_\calI(x^*)$ for all $i \in \range{w_\calI(x^*)}{W}$.
    Thus \cref{proof:correctness-1:claim:1} yields that $C_1^{\log q}[i] = \Profit{\calI}[i]$ for all $i \in \range{w_\calI(x^*)}{W}$, completing the proof.
\end{proof}

\subsection*{Running Time}

Now we analyze the running time of~\cref{alg:knapsack}.
The key speedup comes from the computation in~\cref{alg:knapsack:line:prediction}, where we use~\cref{thm:convex-minplus} to perform the max-plus convolution.
Since~\cref{thm:convex-minplus} is phrased in terms of min-plus convolution of near-convex functions, we will use the following corollary:

\begin{corollary}\label{cor:concave-maxplus}
    Let $f \colon \rangezero{n} \mapsto \range{-U}{U}$ and $g \colon \rangezero{m} \mapsto \range{-U}{U}$ be given as inputs, where $U \in \Nat$.
    Let $\Delta \geq 1$ such that both $f$ and $g$ are $\Delta$-near concave.
    Then, $\maxconv{f}{g}$ can be computed in time $\Ohtilde((n+m)\Delta)$
\end{corollary}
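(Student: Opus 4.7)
The plan is to derive Corollary~\ref{cor:concave-maxplus} from Theorem~\ref{thm:convex-minplus} by a direct negation argument, since max-plus and min-plus convolutions are equivalent up to negating the inputs and the output (Fact~\ref{fact:equiv-convs}), and $\Delta$-near concavity is by Definition~\ref{def:nearconvex} exactly $\Delta$-near convexity of the negated function.

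Concretely, I would first set $f' := -f$ and $g' := -g$. These are integer-valued functions on $\rangezero{n}$ and $\rangezero{m}$ with values in $\range{-U}{U}$. Since $f$ is $\Delta$-near concave, by definition $-f = f'$ is $\Delta$-near convex; the same holds for $g'$. Explicitly, if $\cvx{f}$ is a concave witness satisfying $\cvx{f}(i) - \Delta \le f(i) \le \cvx{f}(i)$, then $-\cvx{f}$ is a convex witness for $f'$ satisfying $-\cvx{f}(i) \le f'(i) \le -\cvx{f}(i) + \Delta$, and analogously for $g'$.

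Next, I would invoke Theorem~\ref{thm:convex-minplus} on $f'$ and $g'$ to compute $h' := \minconv{f'}{g'}$ in time $\Ohtilde((n+m)\Delta)$. Finally, by Fact~\ref{fact:equiv-convs},
\[
  \maxconv{f}{g} \;=\; -\minconv{-f}{-g} \;=\; -h',
\]
so returning $-h'$ yields $\maxconv{f}{g}$. Negation is an $O(n+m)$-time postprocessing step, so the total running time remains $\Ohtilde((n+m)\Delta)$, as required.

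There is no real obstacle here: the entire argument is a syntactic reduction, and the only thing to verify is that the hypotheses of Theorem~\ref{thm:convex-minplus}, namely integer-valued inputs bounded by $U$ in absolute value together with a rational convex witness at additive distance at most $\Delta$, are preserved under negation. Both properties are immediate from the corresponding properties of $f$ and $g$ and Definition~\ref{def:nearconvex}.
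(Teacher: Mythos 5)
Your proposal is correct and is essentially identical to the paper's one-line proof: negate $f$ and $g$ (which by Definition~\ref{def:nearconvex} turns $\Delta$-near concavity into $\Delta$-near convexity), apply Theorem~\ref{thm:convex-minplus}, and recover the max-plus convolution via Fact~\ref{fact:equiv-convs}. Your extra verification that the bounds $\range{-U}{U}$ and the witness property are preserved under negation is accurate and only makes the argument more explicit.
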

\begin{proof}
    Noting that $-f$ and $-g$ are $\Delta$-near convex (\cref{def:nearconvex}), the result follows from \cref{thm:convex-minplus,fact:equiv-convs}.
\end{proof}

The following lemma shows that the max-plus convolution of two near-concave sequences remains near-concave.

\newcommand{\ihat}{\hat\imath}
\begin{lemma}\label{lem:minconv-preserves-nearcvx}
	Let $f \colon\rangezero{n} \mapsto \Int$ be $\Delta_f$-near concave and  $g \colon\rangezero{m} \mapsto \Int$ be $\Delta_g$-near concave.
    Then, $h := \maxconv{f}{g}$ is $\Delta_h$-near concave with $\Delta_h \leq \max\set{\Delta_f, \Delta_g}$.
\end{lemma}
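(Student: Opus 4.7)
The plan is to take $\hat{f}, \hat{g}$ to be the (upper) concave envelopes of $f, g$ (the pointwise smallest concave functions dominating them). These are still valid witnesses of $\Delta_f$- and $\Delta_g$-near concavity, because any concave upper bound of $f$ dominates the envelope $\hat{f}$. I will then use $\hat{h} := \maxconv{\hat{f}}{\hat{g}}$ as the concave witness for $h$, and verify three properties: $\hat{h}$ is concave, $h \leq \hat{h}$ pointwise, and $\hat{h}(k) - h(k) \leq \max\{\Delta_f, \Delta_g\}$ for every $k$.

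The first two properties are routine. Concavity of a max-plus convolution of concave sequences is the discrete analogue of sup-convolutions preserving concavity and admits a short exchange-argument proof using that first differences of a concave sequence are monotone non-increasing. The bound $h \leq \hat{h}$ is immediate from $f \leq \hat{f}$ and $g \leq \hat{g}$.

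The crux, and the main expected obstacle, is the upper bound $\hat{h}(k) \leq h(k) + \max\{\Delta_f, \Delta_g\}$. My key claim is that for each $k$ there is an argmax pair $(i^*, j^*)$ for $\hat{h}(k)$ in which at least one coordinate is a \emph{touching point} of its envelope, i.e., $\hat{f}(i^*) = f(i^*)$ or $\hat{g}(j^*) = g(j^*)$. Once this is established the bound follows at once: if, say, $\hat{f}(i^*) = f(i^*)$, then $\hat{h}(k) = f(i^*) + \hat{g}(j^*) \leq f(i^*) + g(j^*) + \Delta_g \leq h(k) + \Delta_g$, and symmetrically in the other case; so the deficit is at most $\max\{\Delta_f, \Delta_g\}$.

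To produce such a touching argmax I will use a shift argument. Start with any argmax $(i^*, j^*)$. If neither coordinate is already a touching point, then by the standard piecewise-linear structure of a discrete concave envelope, $\hat{f}$ is linear on a maximal integer segment $[i_1, i_2] \ni i^*$ with $i_1, i_2$ in the touching set $F = \{i : \hat{f}(i) = f(i)\}$, and analogously $\hat{g}$ is linear on some $[j_1, j_2] \ni j^*$ with endpoints in $G$. On the integer range of $t$ keeping $i^* + t \in [i_1, i_2]$ and $j^* - t \in [j_1, j_2]$, the objective $\phi(t) := \hat{f}(i^* + t) + \hat{g}(j^* - t)$ is linear in $t$; since $(i^*, j^*)$ is an argmax over all decompositions of $k$, $\phi$ must be constant, and we may slide $t$ to an endpoint of its feasible range. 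At that endpoint, either $i^* + t \in \{i_1, i_2\} \subseteq F$ or $j^* - t \in \{j_1, j_2\} \subseteq G$, producing the desired touching argmax. Global feasibility of the shift is automatic since $[i_1, i_2] \subseteq [0, n]$ and $[j_1, j_2] \subseteq [0, m]$.
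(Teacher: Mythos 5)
Your proposal is correct and follows essentially the same route as the paper: both take the upper concave envelopes as witnesses, use their max-plus convolution $\hat h$ as the witness for $h$, and establish the deficit bound by exhibiting an argmax pair for $\hat h(k)$ in which one coordinate touches its envelope (the paper's explicit choice $\hat\imath_L = \max\{i_L, k-j_R\}$ is exactly the endpoint of your shift's feasible range). Your write-up is if anything slightly more complete, since you flag that concavity of $\hat h$ must be verified, which the paper uses implicitly.
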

\begin{proof}
    Let $\cvx{f}, \cvx{g}$ be pointwise minimal concave functions with $\cvx{f} \ge f$, $\cvx{g} \ge g$
    and let $\cvx{h} := \maxconv{\cvx{f}}{\cvx{g}}$.
    We will show that $\cvx{h} \ge h \ge \cvx{h} - \max\set{\Delta_f, \Delta_g}$, which implies the statement.

    To show that $\cvx{h} \ge h$, fix $k \in \rangezero{n+m}$ and let $i^*$ be a witness for $h(k)$, i.e., $h(k) = f(i^*) + g(k-i^*)$.
    Then, $\cvx{h}(k) \ge \cvx{f}(i^*) + \cvx{g}(k-i^*) \ge f(i^*) + g(k-i^*) = h(k)$. So $\cvx{h} \ge h$.

    To show that $h \ge \cvx{h} - \Delta$ for $\Delta := \max\set{\Delta_f, \Delta_g}$, fix $k \in \rangezero{n+m}$ and let $i^*$ be a witness for $\cvx{h}(k)$, i.e., $\cvx{h}(k) = \cvx{f}(i^*) + \cvx{g}(k-i^*)$.
    Note that $\cvx{f}$ is piecewise a linear interpolation between points on $f$.
    In particular, there exist $i_L \leq i^* \leq i_R$ such that $\cvx{f}(i_L) = f(i_L), \cvx{f}(i_R) = f(i_R)$ and
    $\cvx{f}(i)$ is linear for $i \in \range{i_L}{i_R}$.
    Similarly, for $j^* := k - i^*$ there exist $j_L \leq j^* \leq j_R$ such that $\cvx{g}(j_L) = g(j_L), \cvx{g}(j_R) = g(j_R)$
    and $\cvx{g}(j)$ is linear for $j \in \range{j_L}{j_R}$.
    We pick the maximum $i_L, j_L$ and minimum $i_R, j_R$ with this property.

    Let $\ihat_L := \max\set{i_L, k - j_R}, \ihat_R := \min\set{i_R, k-j_L}$.
    Observe that the function $\cvx{s}(i) := \cvx{f}(i) + \cvx{g}(k-i)$ is linear for $i \in \range{\ihat_L}{\ihat_R}$,
    and that $\ihat_L \leq i^* \leq \ihat_R$.
    Moreover, by definition of $\cvx{h}$ we have that $\cvx{s}(i) = \cvx{f}(i) + \cvx{g}(k-i) \le \cvx{h}(k)$ for $i \in \range{\ihat_L}{\ihat_R}$.
    Since $i^*$ is a witness of $\cvx{h}(k)$, we have $\cvx{s}(i^*) = \cvx{h}(k)$.
    Combining the above, we obtain that $\cvx{s}(i) = \cvx{h}(k)$ for all $i \in \range{\ihat_L}{\ihat_R}$.
    In particular, $\cvx{f}(\ihat_L) + \cvx{g}(k-\ihat_L) = \cvx{h}(k)$, and thus
    \begin{align}
        h(k)
            &\ge f(\ihat_L) + g(k-\ihat_L) \nonumber \\
            &= \cvx{f}(\ihat_L) + \cvx{g}(k - \ihat_L) + (f(\ihat_L) - \cvx{f}(\ihat_L)) + (g(k-\ihat_L) - \cvx{g}(k - \ihat_L)) \nonumber \\
            &= \cvx{h}(k) + (f(\ihat_L) - \cvx{f}(\ihat_L)) + (g(k-\ihat_L) - \cvx{g}(k - \ihat_L)). \label{proof:minconv-preserves-nearcvx:eqn:1}
    \end{align}
    Finally, since $\ihat_L \in \set{i_L, k - j_R}$ and $f(i_L) = \cvx{f}(i_L)$ and $g(j_R) = \cvx{g}(j_R)$,
    one of the two last summands in~\eqref{proof:minconv-preserves-nearcvx:eqn:1} must be 0.
    Using the near-concavity of $f$ and $g$, we can bound the other summand by $f(\ihat_L) - \cvx{f}(\ihat_L) \ge -\Delta_f$ or
    $g(k-\ihat_L) - \cvx{g}(k-\ihat_L) \ge -\Delta_g$.
    This yields $h(k) \ge \cvx{h}(k) - \Delta_f$ or $h(k) \ge \cvx{h}(k) - \Delta_g$.
    In any case, we conclude that $h(k) \ge \cvx{h}(k) - \max\set{\Delta_f, \Delta_g}$ holds for every $k \in \rangezero{n+m}$.
\end{proof}

The next lemma shows that the sequences we combine in \cref{alg:knapsack:line:prediction} are near-concave.

\begin{restatable}[Near Concavity]{lemma}{lemnearconcavity}\label{lem:near-concavity}
    For every level $\ell \in \range{1}{q}$ and every $j \in \range{1}{q/2^\ell}$, it holds that $C^{\ell}_j[J^\ell]$ is $\pmax$-near concave.
\end{restatable}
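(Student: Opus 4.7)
The plan is a straightforward induction on $\ell$, with a base case at $\ell = 0$ and an inductive step that is essentially a direct invocation of~\cref{lem:minconv-preserves-nearcvx}.

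For the base case, I would show that $C_j^0[J^0] = \Profit{\calI_j^0}[J^0]$ is $\pmax$-near concave by comparing the integral knapsack profit function to its LP (fractional) relaxation. Let $\Profit{\calI_j^0}^*[i]$ denote the maximum fractional profit achievable on item set $\calI_j^0$ for weight budget $i$, where fractional solutions $x \in [0,1]^{|\calI_j^0|}$ are allowed. Two standard facts suffice: (i) as the optimal value of a maximization LP in canonical form with $\le$ constraint, $\Profit{\calI_j^0}^*$ is a (piecewise-linear) concave function of the right-hand side $i$; (ii) any optimal fractional solution has at most one fractional coordinate, and zeroing it out yields an integer feasible solution whose profit is at least $\Profit{\calI_j^0}^*[i] - \pmax$. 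Combined with the trivial relaxation bound $\Profit{\calI_j^0}[i] \le \Profit{\calI_j^0}^*[i]$, this identifies $\Profit{\calI_j^0}^*$ as a concave witness showing $\pmax$-near concavity of $\Profit{\calI_j^0}$. Restricting to $J^0$ preserves the property by restricting the witness to the same interval.

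For the inductive step, assume both $C_{2j-1}^{\ell-1}[J^{\ell-1}]$ and $C_{2j}^{\ell-1}[J^{\ell-1}]$ are $\pmax$-near concave. Viewing them as functions on the interval $J^{\ell-1}$ (shifting indices to start at $0$), I would apply~\cref{lem:minconv-preserves-nearcvx} to conclude that their full max-plus convolution on $J^{\ell-1} + J^{\ell-1}$ is $\max\set{\pmax, \pmax\} = \pmax$-near concave. Since $C_j^\ell[J^\ell]$ is obtained by restricting this convolution to the sub-interval $J^\ell \subseteq J^{\ell-1} + J^{\ell-1}$, and since near-concavity is clearly preserved under restriction (one simply restricts the witnessing concave function to the same interval), the claim for level $\ell$ follows.

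The only non-trivial ingredient is the base case: one has to invoke both the concavity of the LP value function in its right-hand side and the classical fractional-knapsack integrality-gap bound to get the $\pmax$-additive loss. The inductive step is immediate from~\cref{lem:minconv-preserves-nearcvx}, which is why that lemma was established just beforehand.
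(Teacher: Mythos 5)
Your proposal matches the paper's proof essentially step for step: the base case uses the fractional knapsack relaxation as the concave witness (the paper constructs it explicitly as the greedy piecewise-linear profile with non-increasing profit-to-weight slopes, whereas you invoke the general concavity of the LP value function in its right-hand side, but these are the same object), together with the observation that dropping the single fractional item costs at most $\pmax$; and the inductive step is the same direct application of \cref{lem:minconv-preserves-nearcvx}. The proof is correct and no further comparison is needed.
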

\begin{proof}
    We prove the statement using induction.
    Focus in the base case $\ell = 0$.
    For each $j \in \range{1}{q}$, we have that $C_j^0[J^0] = \Profit{\calI_j^0}[J^0]$.
    In what follows, we argue that $\Profit{\calI_j^0}$ is $\pmax$-near concave.
    Consider the fractional greedy solution for Knapsack: sort the items $(p_1, w_1), \dots, (p_m, w_m)$ in $\calI_j^0$ non-decreasingly by their profit-to-weight ratio, i.e., so that $p_1/w_1 \geq p_2/w_2 \geq \cdots \geq p_m/w_m$.
    Let $M := \sum_{i=1}^m w_i$.
    Then, construct the sequence $\tilde{\calP}\range{0}{M}$ by setting breakpoints
    \[
        \tilde\calP[0] = 0,\, \tilde{\calP}[w_1] = p_1, \, \tilde{\calP}[w_1+w_2] = p_1+p_2, \, \dots,\, \tilde{\calP}[w_1 + \cdots + w_m] = p_1 + \cdots + p_m,
    \]
    and a linear interpolation between every pair of consecutive breakpoints.
    In this way, $\tilde\calP[i]$ corresponds the optimal solution to the \emph{fractional} version of Knapsack with capacity $i$, i.e.,
    in the setting where items can be fractionally packed in a solution.

    \begin{claim}\label{proof:combine-conv:claim:1}
        The sequence $\tilde\calP$ is concave, and it holds that $\tilde\calP[i] \geq \Profit{\calI_j^0}[i] \geq \tilde\calP[i] - \pmax$ for every $i \in \rangezero{M}$.
    \end{claim}
    \begin{claimproof}
        For each $i \in \range{1}{M-1}$ it holds that $\tilde\calP[i] - \tilde\calP[i-1] \geq \tilde\calP[i+1] - \tilde\calP[i]$ since the slopes of the linear pieces between breakpoints are non-decreasing due to the sorting by profit-to-weight ratio. This means that $\tilde\calP$ is concave.

        For each $i \in \rangezero{M}$, it holds that $\tilde\calP[i] \geq \Profit{\calI_j^0}[i]$ since $\tilde\calP[i]$ is the optimal solution of the fractional Knapsack.
        Moreover, observe that the solution attaining $\tilde\calP[i]$ contains at most one item allocated fractionally.
        By removing that item, we obtain a feasible (integral) solution to the 0-1 Knapsack of capacity $i$, and the profit is reduced by at most $\pmax$.
        This implies that $\Profit{\calI_j^0}[i] \geq \tilde\calP[i] - \pmax$.
    \end{claimproof}
    By \cref{proof:combine-conv:claim:1}, we conclude that $\Profit{\calI_j^0}\range{0}{M}$ is $\pmax$-near concave (see~\cref{def:nearconvex}), and therefore $C_j^0[J^0]$ is as well, which completes the proof of the base case.

    For the inductive step, consider a level $\ell > 0$.
    Fix a $j \in \range{1}{q/2^\ell}$.
    By the inductive hypothesis, $C_{2j-1}^{\ell-1}[J^{\ell-1}]$ and $C_{2j}^{\ell-1}[J^{\ell-1}]$ are $\pmax$-near concave.
    Thus, by~\cref{lem:minconv-preserves-nearcvx} we obtain that $C_j^\ell[J^\ell] = \maxconv{C^{\ell-1}_{2j-1}[J^{\ell-1}]}{C^{\ell-1}_{2j}[J^{\ell-1}]}$ is $\pmax$-near concave, completing the proof.
\end{proof}

\begin{lemma}\label{lem:combine-conv}
    Fix a level $\ell \in \range{1}{q}$ and an iteration $j \in \range{1}{q/2^\ell}$.
    The computation of $C_j^{\ell}$ in~\cref{alg:knapsack:line:prediction} takes time $\Ohtilde(\pmax\sqrt{\Delta 2^\ell})$
\end{lemma}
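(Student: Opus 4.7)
The plan is a direct application of the near-convex min-plus convolution tool (in its max-plus, near-concave incarnation given by \cref{cor:concave-maxplus}) to the two sequences being combined in \cref{alg:knapsack:line:prediction}. Three ingredients need to be checked: the lengths of the inputs, their near-concavity parameter, and the bound on their entries.

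First I would observe that the two input sequences are $C^{\ell-1}_{2j-1}[J^{\ell-1}]$ and $C^{\ell-1}_{2j}[J^{\ell-1}]$, each of length
\[
|J^{\ell-1}| \;=\; 2\sqrt{\Delta\, 2^{\ell-1}}\,\eta + 1 \;=\; \widetilde{O}\bigl(\sqrt{\Delta\, 2^\ell}\bigr),
\]
using $\eta = 11\log n$. Second, \cref{lem:near-concavity} already guarantees that both of these sequences are $\pmax$-near concave (this is the statement for level $\ell-1$, which I can assume by induction or simply cite). Third, the entries of $C^{\ell-1}_{2j-1}[J^{\ell-1}]$ and $C^{\ell-1}_{2j}[J^{\ell-1}]$ are non-negative profits of subsets of $\calI$, hence bounded by $n\,\pmax$, so they fit the hypothesis of \cref{cor:concave-maxplus} with $U = n\pmax$.

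With these three ingredients in place, \cref{cor:concave-maxplus} computes the full max-plus convolution of the two sequences in time
\[
\widetilde{O}\bigl(|J^{\ell-1}|\cdot \pmax\bigr) \;=\; \widetilde{O}\bigl(\pmax\,\sqrt{\Delta\, 2^\ell}\bigr),
\]
after which we simply restrict the output to the entries indexed by $J^\ell$. Extracting a subinterval is linear in the output length and is absorbed by the same bound. The $\widetilde{O}(\cdot)$ in \cref{cor:concave-maxplus} hides $\polylog(|J^{\ell-1}|\cdot U) = \polylog(n\pmax)$ factors, which are already absorbed in our $\widetilde{O}$ notation.

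There is no real obstacle here beyond matching the hypotheses of \cref{cor:concave-maxplus}. The only mildly subtle point is making sure the near-concavity parameter used as input to the corollary is $\pmax$ (not something that grows with $\ell$): this is exactly the content of \cref{lem:minconv-preserves-nearcvx}, which is the reason the recursive application does not blow up the near-concavity parameter across levels. Once that is noted, the bound $\widetilde{O}(\pmax\,\sqrt{\Delta\, 2^\ell})$ follows immediately.
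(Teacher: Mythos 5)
Your proposal is correct and follows essentially the same route as the paper: cite \cref{lem:near-concavity} for the $\pmax$-near concavity of $C^{\ell-1}_{2j-1}[J^{\ell-1}]$ and $C^{\ell-1}_{2j}[J^{\ell-1}]$, then apply \cref{cor:concave-maxplus} with $|J^{\ell-1}| = \Ohtilde(\sqrt{\Delta 2^\ell})$. The extra checks you include (the entry bound $U = n\pmax$ and the cost of restricting to $J^\ell$) are fine but not needed beyond what the paper states.
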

\begin{proof}
    By~\cref{lem:near-concavity}, the sequences $C_{2j-1}^{\ell-1}[J^{\ell-1}], C_{2j}^{\ell-1}[J^{\ell-1}]$ are $\pmax$-near concave.
    Thus, by \cref{cor:concave-maxplus}, their max-plus convolution can be computed in time $\Ohtilde(\pmax |J^{\ell}|) = \Ohtilde(\pmax\sqrt{\Delta 2^\ell})$, where we used $\eta = \Ohtilde(1)$.
\end{proof}

\begin{lemma}[Running Time of~\cref{alg:knapsack}]\label{lem:running-time}
    \cref{alg:knapsack} runs in time \[\Ohtilde((\pmax W)^{2/3} (n \wmax)^{1/3} + n\wmax).\]
\end{lemma}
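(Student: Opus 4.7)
My plan is to bound separately the cost of the base case (\cref{alg:knapsack:line:for-loop-basecase}--\cref{alg:knapsack:line:set-base}) and the cost of the combination loop (\cref{alg:knapsack:line:for-combining}--\cref{alg:knapsack:line:prediction}), then substitute the parameters $q$, $\Delta$ and simplify. For the base case, each call on line~\ref{alg:knapsack:line:base-dp} takes time $O(|\calI_j^0| \cdot (W/q + \sqrt{\Delta}\eta))$ by \cref{fact:dp}. Summing over $j \in \range{1}{q}$ telescopes the first factor to $n$, giving a total of $O(n \cdot W/q + n\sqrt{\Delta}\eta) = \Ohtilde(nW/q + n\sqrt{\Delta})$ for the base case.

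For the combination phase, \cref{lem:combine-conv} tells me that each of the $q/2^\ell$ convolutions at level $\ell$ costs $\Ohtilde(\pmax \sqrt{\Delta 2^\ell})$. Therefore level $\ell$ contributes $\Ohtilde\!\bigl((q/2^\ell)\cdot \pmax \sqrt{\Delta 2^\ell}\bigr) = \Ohtilde(\pmax q \sqrt{\Delta} \cdot 2^{-\ell/2})$. Summing $\ell = 1, \dots, \log q$ the factor $2^{-\ell/2}$ forms a convergent geometric series, so the combination phase costs $\Ohtilde(\pmax q \sqrt{\Delta}) = \Ohtilde(\pmax \sqrt{q \wmax W})$ after inserting $\Delta = \wmax W/q$.

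It then remains to plug in the two candidates for $q$. When $q = (n/\pmax)^{2/3}(W/\wmax)^{1/3}$, a direct computation shows
\[
    \pmax \sqrt{q\wmax W} = (\pmax W)^{2/3}(n\wmax)^{1/3} \qquad\text{and}\qquad nW/q = (\pmax W)^{2/3}(n\wmax)^{1/3},
\]
while the remaining term $n\sqrt{\Delta}$ evaluates to $(n\wmax)^{2/3}(\pmax W)^{1/3}$, which by the elementary inequality $a^{2/3}b^{1/3} \leq \max\{a,\, a^{1/3}b^{2/3}\}$ (applied with $a=n\wmax$, $b=\pmax W$) is at most $(\pmax W)^{2/3}(n\wmax)^{1/3} + n\wmax$. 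When instead $q = W/\wmax$, all three summands collapse: $nW/q = n\wmax$, $n\sqrt{\Delta} = n\wmax$, and $\pmax\sqrt{q\wmax W} = \pmax W$, and the definition of the min forces $\pmax W \leq n\wmax$ in this branch, so everything is $\Ohtilde(n\wmax)$.

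Combining the two cases, the total running time is bounded by $\Ohtilde((\pmax W)^{2/3}(n\wmax)^{1/3} + n\wmax)$, as claimed. The only non-routine step is the case analysis on which side of the min defines $q$, together with the elementary inequality needed to absorb the term $(n\wmax)^{2/3}(\pmax W)^{1/3}$ coming from $n\sqrt{\Delta}$ into the target bound; everything else is telescoping, a geometric sum, and substitution.
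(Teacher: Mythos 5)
Your proof is correct and follows essentially the same route as the paper: bound the base case by $\Ohtilde(nW/q + n\sqrt{\Delta})$, bound the combination phase by a geometric series summing to $\Ohtilde(\pmax\sqrt{q\wmax W})$, and substitute $q$; your explicit case split on the min and the inequality $a^{2/3}b^{1/3}\leq\max\{a,\,a^{1/3}b^{2/3}\}$ is just the paper's AM--GM absorption of the term $(\pmax W)^{1/3}(n\wmax)^{2/3}$ in a different guise. The only point the paper addresses that you omit is the degenerate case $q<1$ (which can occur when $\pmax > n\sqrt{W/\wmax}$), where the partition into $q$ groups is undefined and one falls back to the standard $O(nW)$ dynamic program, which is within the claimed bound there.
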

\begin{proof}
    Recall that $q = \min\set{(n/\pmax)^{2/3}(W/\wmax)^{1/3}, W/\wmax}$ (up to a factor of 2).
    Since $W \leq n\wmax$, we have that $q \leq n$.
    Moreover, since we assume without loss of generality that $\wmax \leq n$, note that $q < 1$ if and only if
    $q = (n/\pmax)^{2/3}(W/\wmax)^{1/3} < 1$. This implies that $\pmax > n\sqrt{W/\wmax}$.
    But in this case, the claimed running time is $\Omega(nW)$, so the standard $O(nW)$ dynamic programming algorithm (\cref{fact:dp}) already achieves our time bound.
    Thus, we can assume without loss of generality that $1 \leq q \leq n$, i.e., $q$ is a valid choice for the number of groups in which we split the item set $\calI$.

    We start bounding the running time of the base case, i.e., the computation of the arrays $C_j^0$ for $j \in \range{1}{q}$ in~\cref{alg:knapsack:line:for-loop-basecase}.
    By~\cref{fact:dp}, and the definition $\Delta = \wmax W / q$ this takes time
    \begin{equation}\label{proof:running-time:eqn:1}
        O\left(\sum_{j=1}^q |\calI_j^0|(\tfrac{W}{q} + \sqrt{\Delta} \eta)\right)
            = O\left(n(\tfrac{W}{q} + \sqrt{\Delta}\eta)\right)
            = \Ohtilde\left(n\tfrac{W}{q} + n\sqrt{\tfrac{\wmax W}{q}}\right).
    \end{equation}

    Now we bound the time of the combination step done in~\crefrange{alg:knapsack:line:for-combining}{alg:knapsack:line:return}.
    At level $\ell \in \range{1}{q}$ and iteration $j \in \range{1}{q/2^\ell}$ the execution of~\cref{alg:knapsack:line:prediction}
    takes time $\Ohtilde(\pmax \sqrt{\Delta 2^\ell})$ by~\cref{lem:combine-conv}.
    Thus, we can bound the overall time as
    \begin{align*}
        \sum_{\ell=1}^{\log q}\sum_{j=1}^{q/2^{\ell}} \Ohtilde(\pmax \sqrt{\Delta 2^\ell})
            &= \sum_{\ell=1}^{\log q} \frac{q}{2^\ell}\Ohtilde\left(\pmax \sqrt{\tfrac{\wmax W}{q} 2^\ell} \,\right)
            = \sum_{\ell=1}^{\log q}\Ohtilde \left(\pmax\sqrt{\tfrac{q \wmax W}{2^\ell}}\right),
    \end{align*}
    since this is a geometric series, it is bounded by the first term $\Ohtilde(\pmax\sqrt{q \wmax W})$.
    Combining this with~\eqref{proof:running-time:eqn:1}, we obtain overall time
    \[
        \Ohtilde\left(\pmax \sqrt{q \wmax W} + n\tfrac{W}{q} + n \sqrt{\tfrac{\wmax W}{q}}\right).
    \]
    Recalling that $q = \Theta(\min\set{(n/\pmax)^{2/3}(W/\wmax)^{1/3}, W/\wmax})$, we obtain overall time
    \[
        \Ohtilde((\pmax W)^{2/3}(n\wmax)^{1/3} + n\wmax + (\pmax W)^{1/3}(n\wmax)^{2/3}).
    \]
    Finally, note that by the AM-GM inequality we have
    \begin{align*}
        (\pmax W)^{1/3}(n \wmax)^{2/3}
            &= \sqrt{(\pmax W)^{2/3} (n\wmax)^{1/3} n\wmax} \\
            &\leq O((\pmax W)^{2/3}(n\wmax)^{1/3} + n\wmax).
    \end{align*}
    Thus, the overall running time is $\Ohtilde((\pmax W)^{2/3}(n\wmax)^{1/3} + n\wmax)$, as claimed.
\end{proof}

\begin{proof}[Proof of~\cref{thm:alg-knapsack}]
    Run~\cref{alg:knapsack}.
    By \cref{lem:correctness-2}, we obtain that $\calI_1^{\log q}[W] = \OPT$ with probability at least $1 - 1/n^2$, which proves correctness.
    The running time is immediate from~\cref{lem:running-time}.
    Observe that we can obtain success probability $1 - 1/n^c$ for any constant $c\ge 2$ by repeating the algorithm $c/2$ times.
\end{proof}

\paragraph{Proof Sketch of \cref{thm:alg-knapsack-symmetric}}
Our presentation focused on proving \cref{thm:alg-knapsack}.
The proof of the symmetric variant stated in \cref{thm:alg-knapsack-symmetric} is very similar, thus we only sketch the required changes.
Essentially, we need to exchange profits with weights everywhere, which in turn means exchanging max-plus convolutions by min-plus convolutions.
In more detail:
Instead of working with the profit sequence $\Profit{\calI}$, we work with the weight sequence $\calW_{\calI}$,
where the entry $\calW_{\calI}[j]$ stores the minimum weight of a solution with profit at least $j$.
We do not know $\OPT$, but we can compute an approximation $\tilde{V}$ satisfying $\tilde{V} - \pmax \leq \OPT \leq \tilde{V}$ in linear time (see e.g.~\cite[Theorem 2.5.4]{KellererPP04}).
In the algorithm, we exchange all ocurrences of $\wmax$ by $\pmax$ and all ocurrences of $W$ by $\tilde{V}$.
With these changes, the functions $C_j^\ell$ are now $\wmax$-near convex (instead of $\pmax$-near concave) so we use \cref{thm:convex-minplus} directly instead of \cref{cor:concave-maxplus}.
In this way, we obtain the array $C^{\log q}_1\range{\tilde{V} - \pmax}{\tilde{V}} = \calW_{\calI}\range{\tilde{V} - \pmax}{\tilde{V}}$.
Then, we can infer $\OPT$ as the largest $i \in \range{\tilde{V}-\pmax}{\tilde{V}}$ such that $\calW_{\calI}[i] \leq W$.

\paragraph{Reconstructing an optimal solution}

So far we were only concerned with returning the optimal profit of a given Knapsack instance.
To reconstruct a solution $x \in \set{0,1}^n$ such that $p_{\calI}(x) = \OPT$, we proceed as follows.
After running~\cref{alg:knapsack}, we obtain the sequences $C^{\ell}_1[J^{\ell}]$ for every
$\ell \in \rangezero{\log q}$ and $j \in \range{1}{q/2^\ell}$.
For the output entry $C^{\log q}_1[W]$, we find a witness $i \in J^{\log q - 1}$, i.e., an index $i$ such that
$C_1^{\log q-1}[i] + C^{\log q - 1}_2[W - i] = C^{\log q}_1[W]$.
This can be done in time $|J^{\log q-1}| = \Ohtilde(\sqrt{\Delta q/2})$ by simply trying all possibilities.
Then, we continue recursively finding witnesses for $i$ and $W - i$.
Eventually, we reach one entry in each array $C_j^0$ for $j \in \range{1}{q}$.
Note that this takes time proportional to the length of all sequences
$\sum_{\ell=0}^{\log q}q/{2^\ell} \cdot O(|J^{\ell}|) = \Ohtilde(q \sqrt{\Delta}) = \Ohtilde(\sqrt{q \wmax W}) \leq \Ohtilde(n\wmax)$,
where the last step uses $q \leq W/\wmax$ and $W \leq n \wmax$.
Finally, observe each array $C_j^0$ was computed using the standard dynamic programming algorithm of \cref{fact:dp},
which allows to retrieve a solution for an fixed entry $C^0_j[i]$ in the same time it takes to compute it.
Thus, we can retrieve the optimal solution with no extra overhead on the overall running time.
\section{MinPlus Convolution for Near-Convex Sequences}\label{sec:minplus-algorithm}

In this section we prove~\cref{thm:convex-minplus}.

\thmconvexminplus*

\subsection{Preparations}

Throughout this section, fix the functions $f \colon [n] \mapsto \range{-U}{U}, g \colon [m] \mapsto \range{-U}{U}$.
Recall that we say that $f \colon\rangezero{n} \mapsto \Int$ is $\Delta_f$-near convex, if there is a convex function $\cvx{f} \colon \rangezero{n} \mapsto \Rat$ such that $\cvx{f}(i) \leq f(i) \leq \cvx{f}(i) + \Delta_f$ for all $i \in \rangezero{n}$ (see~\cref{def:nearconvex}).
First observe that the lower convex hull of the points $\set{(i, f(i)) \mid i \in \rangezero{n}}$ gives the pointwise
maximal convex function $\cvx{f}$ with $\cvx{f} \leq f$.
This can be computed in time $O(n)$ by Graham's scan~\cite{Graham72}, since the points are already sorted by $x$-coordinate.
Then, we can infer $\Delta_f = \max\set{1, \, \max_{i \in \rangezero{n}}f(i) - \cvx{f}(i)}$.
Thus, from now on we assume that we know $\cvx{f}, \Delta_f, \cvx{g}, \Delta_g$. Set $\Delta := \max\set{\Delta_f, \Delta_g}$.
Let $\cvx{h} := \minconv{\cvx{f}}{\cvx{g}}$ and $h := \minconv{f}{g}$. The goal is to compute $h$.

We start by introducing some notation.
We call $(i, j) \in \rangezero{n} \times \rangezero{m}$ a \emph{point}.
We visualize a point $(i,j)$ as lying on the $i$-th row and $j$-th column of an $n \times m$ grid, where $(0,0)$ is on the bottom-left corner and $(n, m)$ on the top right corner.
A point $(i, j)$ lies on \emph{diagonal} $i + j$.
For any $\delta \geq 0$, a point $(i, j)$ is \emph{$\delta$-relevant} if $\cvx{f}(i) + \cvx{g}(j) \leq \cvx{h}(i + j) + \delta$.
We denote by $R_\delta$ the set of all $\delta$-relevant points.

Points that are 0-relevant are important because of the following observation:
We call $i$ a \emph{witness} for $\cvx{h}(k)$ if $\cvx{f}(i) + \cvx{g}(k-i) = \cvx{h}(k)$.
Thus, observe that $i$ is a witness for $\cvx{h}(k)$ if and only if $(i, k-i)$ is a 0-relevant point.

The importance of $2\Delta$-relevant points is captured by the following lemma:

\begin{lemma}\label{lem:relevant-points}
    If $(i, j) \notin R_{2\Delta}$ then $f(i) + g(j) > h(i + j)$.
\end{lemma}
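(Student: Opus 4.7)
The plan is to combine two straightforward consequences of $\Delta$-near convexity: a pointwise lower bound on $f(i) + g(j)$ in terms of $\cvx{f}(i) + \cvx{g}(j)$, and an upper bound on $h(i+j)$ in terms of $\cvx{h}(i+j)$. The hypothesis $(i,j) \notin R_{2\Delta}$ then provides exactly the gap $2\Delta$ needed to separate the two.

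Concretely, set $k := i+j$. From $\cvx{f} \leq f$ and $\cvx{g} \leq g$ I immediately get $f(i) + g(j) \geq \cvx{f}(i) + \cvx{g}(j)$, and by the assumption $(i,j) \notin R_{2\Delta}$ this is strictly larger than $\cvx{h}(k) + 2\Delta$. So it suffices to prove $h(k) \leq \cvx{h}(k) + 2\Delta$. For this, I would fix a witness $i^* \in \rangezero{n}$ with $k - i^* \in \rangezero{m}$ attaining $\cvx{h}(k) = \cvx{f}(i^*) + \cvx{g}(k - i^*)$ (such a witness exists because $k \in \rangezero{n+m}$ so the minimum defining $\cvx{h}(k)$ ranges over a nonempty set). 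Using the upper bounds $f(i^*) \leq \cvx{f}(i^*) + \Delta_f$ and $g(k - i^*) \leq \cvx{g}(k - i^*) + \Delta_g$ from the definition of near convexity,
\[
    h(k) \;\leq\; f(i^*) + g(k - i^*) \;\leq\; \cvx{f}(i^*) + \cvx{g}(k - i^*) + \Delta_f + \Delta_g \;=\; \cvx{h}(k) + \Delta_f + \Delta_g \;\leq\; \cvx{h}(k) + 2\Delta,
\]
where the last inequality uses $\Delta = \max\{\Delta_f, \Delta_g\}$. Chaining the bounds gives $f(i) + g(j) \geq \cvx{f}(i) + \cvx{g}(j) > \cvx{h}(k) + 2\Delta \geq h(k)$, as required.

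There is no real obstacle here: the argument is a two-line sandwich and the only minor care is making sure the witness $i^*$ lies in the allowed index range, which is automatic from $k \in \rangezero{n+m}$. The conceptual content of the lemma is simply that the convex approximation $\cvx{h}$ of the true convolution $h$ loses at most $\Delta_f + \Delta_g \leq 2\Delta$ in the upward direction, while it is always a lower bound on $f(i) + g(j)$; thus anything more than $2\Delta$ above $\cvx{h}$ at the convex-hull level can never be attained by $h$.
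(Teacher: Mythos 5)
Your proof is correct and matches the paper's argument essentially verbatim: both use the lower bound $f(i)+g(j)\geq \cvx{f}(i)+\cvx{g}(j) > \cvx{h}(k)+2\Delta$ from non-relevance, and bound $h(k)\leq \cvx{h}(k)+2\Delta$ via a witness $i^*$ for $\cvx{h}(k)$ together with the upper bounds from near convexity. The only cosmetic difference is that you track $\Delta_f+\Delta_g$ before rounding up to $2\Delta$, where the paper bounds each term by $\Delta$ directly.
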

That is, points that are not $2\Delta$-relevant can be ignored for the purpose of computing $h$.
\begin{proof}
    Since $(i, j)$ is not $2\Delta$-relevant, it holds that $f(i) + g(j) \geq \cvx{f}(i) + \cvx{g}(j) > \cvx{h}(i+j) + 2\Delta$.
    Let $k := i + j$, and let $i^*$ be a witness for $\cvx{h}(k)$, i.e., $\cvx{f}(i^*) + \cvx{g}(k-i^*) = \cvx{h}(k)$.
    Then,
    \[
      h(k) \leq f(i^*) + g(k-i^*) \leq \cvx{f}(i^*) + \Delta + \cvx{g}(k-i^*) + \Delta = \cvx{h}(k) + 2\Delta < f(i) + g(j). \qedhere
    \]
\end{proof}

We say that a set of points $P$ is a \emph{monotone path} if for every $k \in \rangezero{n+m}$ $P$ contains exactly one point $(i_k, j_k)$ on diagonal $k$,
and we have $(i_{k+1}, j_{k+1}) \in \set{(i_k +1, j_k), (i_k, j_k +1)}$ for every $k \in \rangezero{n+m-1}$, see~\cref{fig:monotone-path} for an illustration.
For any $\delta > 0$, we let
\begin{align*}
    P^+_\delta &:= \set{(i,k-i) \mid k \in \rangezero{n+m}, i \in \rangezero{n} \text{ is maximal s.t.}\ (i, k-i) \text{ is }\delta\text{-relevant}}, \\
    P^-_\delta &:= \set{(i,k-i) \mid k \in \rangezero{n+m}, i \in \rangezero{n} \text{ is minimal s.t.}\ (i, k-i) \text{ is }\delta\text{-relevant}}.
\end{align*}

The next two lemmas show that $P^+_\delta, P^-_\delta$ are monotone paths and that $P^+_\delta, P^-_\delta$ form the boundary of $R_\delta$, see~\cref{fig:delta-region} for an illustration.
This establishes structure of $R_\delta$ that we will be exploit later.

\begin{figure}[ht]
    \begin{subfigure}[c]{0.32\textwidth}
        \centering
        \includegraphics[width=\linewidth]{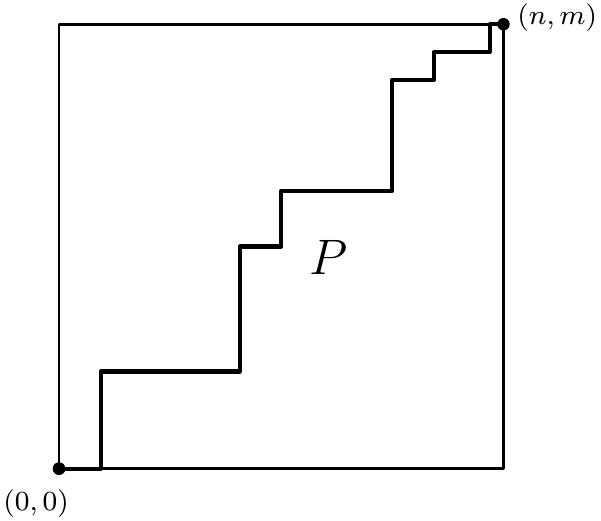}
        \caption{A monotone path $P$}
        \label{fig:monotone-path}
    \end{subfigure}
    \hfill
    \begin{subfigure}[c]{0.32\textwidth}
        \centering
        \includegraphics[width=\linewidth]{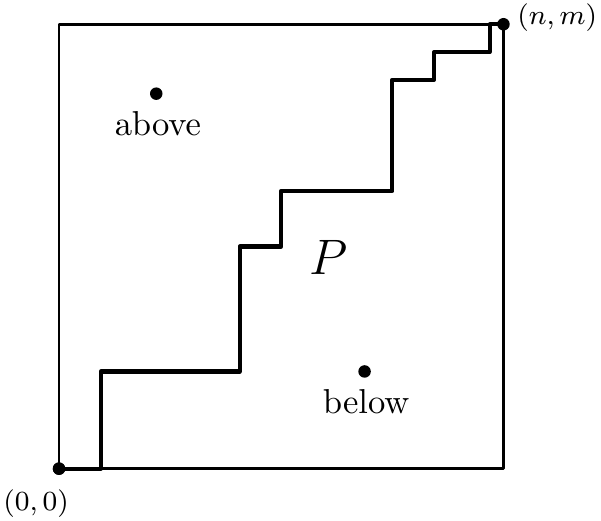}
        \caption{Points above and below $P$}
        \label{fig:above-below}
    \end{subfigure}
    \hfill
    \begin{subfigure}[c]{0.32\textwidth}
        \centering
        \includegraphics[width=\linewidth]{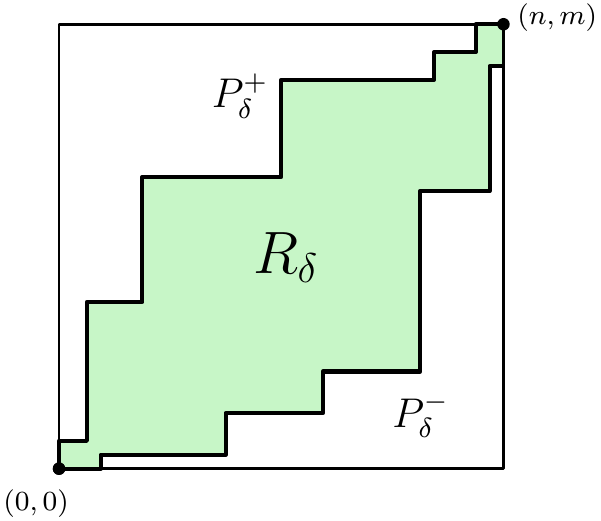}
        \caption{$R_\delta$ is between $P_\delta^+$ and $P_\delta^-$}
        \label{fig:delta-region}
    \end{subfigure}
    \caption{Visualizations for concepts used in~\cref{sec:minplus-algorithm}.}
    \label{fig:vis-paths}
 \end{figure}

\begin{restatable}[Monotone Paths]{lemma}{lemmonotonepaths}\label{lem:monotone-paths}
    For any $\delta \geq 0$, $P^-_\delta, P^+_\delta$ are monotone paths.
\end{restatable}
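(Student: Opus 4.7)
The plan is to prove the statement for $P^+_\delta$; the case $P^-_\delta$ is analogous by exchanging the roles of $\cvx{f}$ and $\cvx{g}$ (equivalently, by reflecting $i \leftrightarrow k-i$ on each diagonal). Let $T_k(i) := \cvx{f}(i) + \cvx{g}(k-i)$ denote the diagonal slice, $\sigma_f, \sigma_g$ the discrete first differences of $\cvx{f}, \cvx{g}$, $\alpha_k := \cvx{h}(k+1) - \cvx{h}(k)$ the slope of $\cvx{h}$, and $i^+_k$ the largest $i$ with $\cvx{f}(i) + \cvx{g}(k-i) = \cvx{h}(k)$ (the max witness). I will use two classical properties of min-plus convolution of convex functions: $\alpha_k$ is the $(k+1)$-th smallest element of the merged slope multiset $\{\sigma_f(0), \dots, \sigma_f(n-1)\} \cup \{\sigma_g(0), \dots, \sigma_g(m-1)\}$, and the max witnesses satisfy $i^+_{k+1} - i^+_k \in \{0, 1\}$.

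The argument proceeds in three steps. First, well-definedness and interval structure: any max witness $i^+_k$ is zero-relevant hence $\delta$-relevant, so $P^+_\delta$ has a point on every diagonal; moreover, since $T_k$ is the sum of a convex function in $i$ and a reflected convex function in $i$, it is convex in $i$, so its sublevel set $\{i : T_k(i) \leq \cvx{h}(k) + \delta\}$ is an interval $[a_k, b_k]$, and I write $b_k$ for the first coordinate of $P^+_\delta$ on diagonal $k$. Second, the monotonicity $b_{k+1} \geq b_k$: a case analysis on $b_k = i^+_k$ versus $b_k > i^+_k$, together with the characterization $\alpha_k = \min(\sigma_f(i^+_k), \sigma_g(k-i^+_k))$, shows that at least one of $\sigma_g(k-b_k) \leq \alpha_k$ or $\sigma_f(b_k) \leq \alpha_k$ holds, which respectively yields the $\delta$-relevance of $(b_k, k+1-b_k)$ or $(b_k+1, k-b_k)$ on diagonal $k+1$. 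Third, the step-size bound $b_{k+1} \leq b_k + 1$: assuming $b_{k+1} \geq b_k + 2$ for contradiction, the interval structure on diagonal $k+1$ combined with $i^+_{k+1} \leq i^+_k + 1 \leq b_k + 1$ forces both $(b_k+1, k-b_k)$ and $(b_k+2, k-1-b_k)$ to be $\delta$-relevant on diagonal $k+1$; subtracting from the non-$\delta$-relevance of $(b_k+1, k-1-b_k)$ and $(b_k+2, k-2-b_k)$ on diagonal $k$ yields $\sigma_g(k-b_k-1), \sigma_g(k-b_k-2) < \alpha_k$, and the slope-merging characterization of $\cvx{h}$ combined with convexity then produces more than $k$ merged slopes strictly below $\alpha_k$, contradicting its position as the $(k+1)$-th smallest.

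The main obstacle will be the step-size bound: while the interval structure and lower bound follow from one-dimensional convexity of $T_k$, ruling out jumps of size $\geq 2$ requires integrating convexity of both input functions with the global slope structure of $\cvx{h}$. Boundary cases (when $b_k + 2 > n$ or $k - b_k - 2 < 0$) can be handled by noting that the corresponding points do not exist, so the offending inequality is vacuously avoided.
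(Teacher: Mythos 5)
Your overall route is sound and genuinely different from the paper's. The paper never invokes the slope-merging characterization of convex min-plus convolution or the monotonicity of maximum witnesses; instead it argues directly from two inequalities (the $\delta$-relevance of the path point on one diagonal, and the definition of $\cvx{h}$ via an exact witness on the adjacent diagonal), combining them with convexity of $\cvx{f}$ or $\cvx{g}$ alone to transport $\delta$-relevance between diagonals $k-1$ and $k$. Your version front-loads two classical facts and reduces both directions of the step bound to slope comparisons; this is a legitimate trade (the imported facts are standard, and the paper effectively reproves witness monotonicity as the $\delta=0$ case of this very lemma). Steps 1 and 2 of your outline check out: in the subcase $b_k > i^+_k$, maximality of the witness gives $\sigma_f(i^+_k) > \sigma_g(k-i^+_k-1) \geq \sigma_g(k-b_k)$, and convexity gives $\sigma_g(k-b_k) \leq \sigma_g(k-i^+_k)$, so indeed $\sigma_g(k-b_k) \leq \alpha_k$.

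The one step that does not close as written is the final counting in Step 3. Both of your subtractions run in the $g$-direction and yield only $\sigma_g(k-b_k-1) < \alpha_k$ and $\sigma_g(k-b_k-2) < \alpha_k$ (the second is implied by the first via convexity, so it adds nothing). These give exactly $k-b_k$ merged slopes strictly below $\alpha_k$, which is at most $k$ --- perfectly consistent with $\alpha_k$ being the $(k+1)$-st smallest, so there is no contradiction yet; for $b_k = 0$, say, you get exactly $k$ such slopes. To finish, you need one subtraction in the $f$-direction: the $\delta$-relevance of $(b_k+2,\, k-1-b_k)$ on diagonal $k+1$ against the non-$\delta$-relevance of $(b_k+1,\, k-1-b_k)$ on diagonal $k$ gives $\cvx{f}(b_k+2) - \cvx{f}(b_k+1) = \sigma_f(b_k+1) < \alpha_k$, hence $b_k+2$ slopes of $f$ strictly below $\alpha_k$ by convexity, for a total of $(k-b_k)+(b_k+2) = k+2 > k$ and the desired contradiction. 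With that correction (and the boundary conventions you already mention), the argument goes through.
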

\begin{proof}
    Since for each $k \in \rangezero{n+m}$, $\cvx{h}(k)$ has a witness, there is a 0-relevant point $(i, k-i)$.
    Since every 0-relevant point is also $\delta$-relevant, it follows that $P^-_\delta$ contains exactly one point $(i_k, j_k)$ with $i_k + j_k = k$ for every $k \in \rangezero{n+m}$.

    In the following, we show that $i_{k-1} \leq i_k \leq i_{k-1}+1$ holds for all $k \in \range{1}{n+m}$.
    Since $i_k + j_k = k$, it then also follows that $k-1 - j_{k-1} \leq k - j_k \leq k-1-j_{k-1}+1$, which yields $j_{k-1} \leq j_k \leq j_{k-1} + 1$.
    Since $i_{k+1} + j_{k+1} = k + 1 = i_k + j_k + 1$, it follows that $(i_k, j_k) \in \set{(i_{k-1}+1,j_{k-1}), (i_{k-1}, j_{k-1}+1)}$.
    So it remains to prove $i_{k-1} \leq i_k \leq i_{k-1} + 1$.
    We distinguish two cases.

    \begin{description}
        \item[Case 1: $i_k \leq i_{k-1}$.] We show that in this case $i_k \geq i_{k-1}$ (and thus $i_k = i_{k-1}$).
        Let $i^*_{k-1}$ be a witness for $\cvx{h}(k-1)$.
        Note that $i^*_{k-1} \geq i_{k-1}$ by definition of $P^-_\delta$.
        We have
        \[
            \cvx{f}(i_k) + \cvx{g}(k - i_k) \leq \cvx{h}(k) + \delta \leq \cvx{f}(i^*_{k-1}) + \cvx{g}(k-i^*_{k-1})+\delta,
        \]
        where the first inequality follows due to the definition of $P_\delta^-$ and the second due to the definition of $\cvx{h}$.
        Rearranging, we get
        \begin{equation}
            \cvx{g}(k-i_k) - \cvx{g}(k-i^*_{k-1}) \leq \cvx{f}(i^*_{k-1}) - \cvx{f}(i_k) + \delta. \label{proof:monotone-paths:eqn:1}
        \end{equation}
        Since $i^*_{k-1} \geq i_{k-1} \geq i_k$, we have $k-i_k \geq k-i^*_{k-1}$.
        By convexity of $\cvx{g}$, we obtain
        \begin{equation}
            \cvx{g}(k-1-i_k) - \cvx{g}(k-1-i^*_{k-1}) \leq \cvx{g}(k-i_k) - \cvx{g}(k-i^*_{k-1}). \label{proof:monotone-paths:eqn:2}
        \end{equation}
        Combining \eqref{proof:monotone-paths:eqn:1} and \eqref{proof:monotone-paths:eqn:2} and rearranging, we obtain
        \[
            \cvx{f}(i_k) + \cvx{g}(k-1-i_k) \leq \cvx{f}(i_{k-1}^*) + \cvx{g}(k-1-i^*_{k-1}) + \delta = \cvx{h}(k-1) + \delta,
        \]
        where the last equality is by definition of $i^*_{k-1}$.
        Thus, $(i_k, k-1-i_k)$ is $\delta$-relevant, and since $i_{k-1}$ is minimal such that $(i_{k-1}, k-1-i_{k-1})$ is $\delta$-relevant we obtain $i_{k-1} \leq i_k$, as desired.

        \item[Case 2: $i_k > i_{k-1}$.] We show that in this case $i_k \leq i_{k-1} + 1$ (and thus, $i_k = i_{k-1} + 1)$.
        Let $i^*_k$ be a witness for $\cvx{h}(k)$.
        By definition of $P^-_\delta$, we have $i^*_k \geq i_k$. Moreover,
        \[
            \cvx{f}(i_{k-1}) + \cvx{g}(k-1-i_{k-1}) \leq \cvx{h}(k-1)+\delta \leq \cvx{f}(i_k^*-1) + \cvx{g}(k-i_k^*)+\delta,
        \]
        where the first inequality is due to the definition of $P^-_\delta$ and the second due to the definition of $\cvx{h}$.
        Rearranging, we get
        \begin{equation}
            \cvx{g}(k-1-i_{k-1}) - \cvx{g}(k-i^*_k) \leq \cvx{f}(i^*_k -1) - \cvx{f}(i_{k-1}) + \delta. \label{proof:monotone-paths:eqn:3}
        \end{equation}
        Since $i^*_k -1 \geq i_k - 1 \geq i_{k-1}$ and by the convexity of $\cvx{f}$, we have
        \begin{equation}
            \cvx{f}(i_k^* - 1) - \cvx{f}(i_{k-1}) \leq \cvx{f}(i^*_k) - \cvx{f}(i_{k-1}+1). \label{proof:monotone-paths:eqn:4}
        \end{equation}
        Combining and rearranging \eqref{proof:monotone-paths:eqn:3} and \eqref{proof:monotone-paths:eqn:4}, we obtain
        \[
            \cvx{f}(i_{k-1}+1) + \cvx{g}(k-1-i_{k+1}) \le \cvx{f}(i^*_k) - \cvx{g}(k-i^*_k) + \delta = \cvx{h}(k) + \delta,
        \]
        where the last equality holds by definition of $i^*_k$.
        Hence, $(i_{k-1}+1, k-1-i_{k+1})$ is $\delta$-relevant.
        Since its diagonal is $k$, $i_{k-1}+1$ is a possible choice for $i_k$.
        By minimality of $i_k$ (due to the definition of $P^-_\delta$), we obtain that $i_k \leq i_{k-1}+1$.
    \end{description}

    In both cases we obtain $i_k \in \set{i_{k-1}, i_{k-1}+1}$, proving the claim.
    This finishes the proof for $P^-_\delta$.
    The proof for $P^+_\delta$ is symmetric (replacing the roles of $f$ and $g$ essentially flips $P^-_\delta$ and $P^+_\delta$).
\end{proof}

Let $(i,j)$ be a point and $P$ a monotone path.
Let $(a,b) \in P$ be the unique point on the same diagonal as $(i,j)$, i.e., $a + b = i + j$.
We say that $(i, j)$ is \emph{below $P$} if $i < a$, \emph{above $P$} if $i > a$, and \emph{on $P$} if $i = a$, see~\cref{fig:above-below} for an illustration.

\begin{lemma}\label{lem:boundary-relevant}
    For any $\delta \geq 0$, $R_\delta$ consists of all points $(i, j)$ that are on or below $P^+_\delta$ and on or above $P^-_\delta$.
\end{lemma}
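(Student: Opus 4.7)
The plan is to reduce the lemma to a simple convexity fact about the set of $\delta$-relevant points on a single diagonal. Fix a diagonal $k \in \rangezero{n+m}$ and consider the function
\[
    \phi_k(i) := \cvx{f}(i) + \cvx{g}(k-i), \qquad i \in \rangezero{n} \text{ with } k - i \in \rangezero{m}.
\]
By definition, a point $(i, k-i)$ is $\delta$-relevant exactly when $\phi_k(i) \leq \cvx{h}(k) + \delta$. Hence showing the lemma amounts to proving that the set of indices $i$ satisfying this inequality is an interval; once this is known, its left endpoint is by definition the $i$-coordinate of $P^-_\delta$ on diagonal $k$ and its right endpoint is that of $P^+_\delta$, so $(i, k-i) \in R_\delta$ is equivalent to the point lying on or above $P^-_\delta$ and on or below $P^+_\delta$.

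The key step is therefore to observe that $\phi_k$ is discrete-convex in $i$. Writing out the second difference,
\[
    \phi_k(i) - 2\phi_k(i+1) + \phi_k(i+2) = \bigl[\cvx{f}(i) - 2\cvx{f}(i+1) + \cvx{f}(i+2)\bigr] + \bigl[\cvx{g}(j) - 2\cvx{g}(j+1) + \cvx{g}(j+2)\bigr]
\]
with $j := k-i-2$, both summands are non-negative by the convexity of $\cvx{f}$ and $\cvx{g}$. Thus $\phi_k$ is convex, and its sublevel set $\{i : \phi_k(i) \leq \cvx{h}(k) + \delta\}$ is an interval of integers.

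Combining the two observations gives the claim: fix any point $(i,j)$ with $k = i+j$. Let $(i^-_k, k-i^-_k)$ and $(i^+_k, k-i^+_k)$ be the points of $P^-_\delta$ and $P^+_\delta$ on diagonal $k$ (these exist as shown in the proof of~\cref{lem:monotone-paths}). Then $(i,j)$ is on or above $P^-_\delta$ and on or below $P^+_\delta$ iff $i^-_k \leq i \leq i^+_k$, which by the interval characterization of the sublevel set is equivalent to $\phi_k(i) \leq \cvx{h}(k) + \delta$, i.e., to $(i,j) \in R_\delta$.

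I do not anticipate any real obstacle here; the entire argument is the one-line convexity observation applied diagonal by diagonal, and the rest is unpacking the definitions of $P^\pm_\delta$ and of "on or above/below" a monotone path.
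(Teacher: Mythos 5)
Your proposal is correct and is essentially the paper's own argument: the paper likewise fixes a diagonal $k$, observes that $r(i) := \cvx{f}(i) + \cvx{g}(k-i)$ is convex as a sum of convex functions, and concludes that the $\delta$-relevant indices on that diagonal form an interval whose endpoints are by definition the points of $P^-_\delta$ and $P^+_\delta$. Your explicit second-difference computation is just a spelled-out version of the same convexity observation.
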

\begin{proof}
    Fix $k \in \rangezero{n + m}$ and let $(i^+, k-i^+), (i^-, k-i^-)$ be the point on diagonal $k$ in $P^+_\delta$ and $P^-_\delta$, respectively.
    Consider any $(i, j) \in R_\delta$ on diagonal $k$.
    By maximality of $i^+$ we have $i \leq i^+$, and similarly $i \geq i^-$ by the minimality of $i^-$.
    Thus, no point in $R_\delta$ is above $P^+_\delta$ or below $P^-_\delta$.
    It remains to show that for any $i^- \leq i \leq i^+$ we have $(i, k-i) \in R_\delta$.
    Note that the function $r(i) := \cvx{f}(i) + \cvx{g}(k-i)$ is convex (since it is the sum of convex functions).
    Since $(i^+, k-i^+)$ is $\delta$-relevant, we have $r(i^+) \leq \cvx{h}(k) + \delta$.
    Similarly, since $(i^-, k-i^-)$ is $\delta$-relevant, we have $r(i^-) \leq \cvx{h}(k) + \delta$.
    By convexity of $r$, we obtain that $r(i) \leq \cvx{h}(k) + \delta$ for all $i^- \leq i \leq i^+$.
    Hence, we conclude that for each $i^- \leq i \leq i^+$ we have $(i, k-i) \in R_\delta$.
\end{proof}

Finally, we need some background on \emph{sumsets}.
Given $A, B \subseteq \range{-U}{U}^2$ where $U \in \Nat$, we define $A + B = \set{a + b \mid a \in A, b \in B}$ as their sumset, where the addition $a+b$ is done componentwise.
The naive way to compute $A + B$ takes time $O(|A| \cdot |B|)$.
For our application, we want to compute the sumset in time near linear in its size $|A + B|$.
For this end, we will use the following tool to compute \emph{sparse non-negative convolution}.
Given vectors $P, Q \in \Nat^n$, their \emph{convolution} $P \conv Q \in \Nat^{2n-1}$ is defined coordinate-wise by
$(P \conv Q)[k] = \sum_{i+j=k}P[i] \cdot Q[j]$.

\begin{theorem}[{Deterministic Sparse Convolution~\cite{BringmannFN22}}]\label{thm:sparseconvolution}
    There is a deterministic algorithm to compute the convolution of two nonnegative vectors $A, B \in \Nat^n$
    in time $O(t\polylog(n\Delta))$, where $t$ is the number of non-zero entries in $A \conv B$ and $\Delta$ is the largest entry in $A$ and $B$.
\end{theorem}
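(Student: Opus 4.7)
The plan is to follow the randomized sparse-convolution template of Cole--Hariharan and Chan--Lewenstein, and then derandomize. Write $C := A \conv B$ and let $t = |\mathrm{supp}(C)|$ be the output sparsity. I would first estimate $t$ by doubling: try guesses $\hat t = 1, 2, 4, \ldots$ and verify each via a cheap hash-based test against the full FFT convolution taken modulo a couple of primes; once $\hat t = \Theta(t)$ we may assume $t$ is known up to a constant factor. From this point the goal reduces to a subroutine of cost $O(t \polylog(n\Delta))$ that recovers a constant fraction of $\mathrm{supp}(C)$ together with the corresponding values of $C$; iterating $O(\log t)$ times and subtracting the recovered contributions then finishes the algorithm.

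For such a subroutine I would pick a prime $p = \Theta(t)$ and a shift $s \in \rangezero{p-1}$ and form the length-$p$ bucketed vectors
\[
    A'[\ell] = \sum_{i \,:\, (i+s) \bmod p = \ell} A[i], \qquad B'[\ell] = \sum_{j \,:\, j \bmod p = \ell} B[j],
\]
and compute their cyclic convolution via FFT in time $O(p \log p) = O(t \log t)$. Because $k \mapsto (k + s) \bmod p$ is an additive homomorphism, the $\ell$-th entry of the cyclic convolution equals $\sum_{k \,:\, (k+s) \bmod p = \ell} C[k]$; hence for a uniformly random $s$ each $k \in \mathrm{supp}(C)$ is alone in its bucket with probability $\Omega(1)$. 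This step uses nonnegativity crucially, since it prevents contributions from cancelling each other inside a bucket.

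To extract the actual index $k$ of a singleton bucket, I would additionally compute the ``moment'' cyclic convolutions whose $\ell$-th entry equals $\sum_{i+j \equiv \ell} i^q A[i] B[j]$ for $q = 1, 2$; each is another FFT in $O(t \polylog n)$ time. In a bucket whose only surviving contribution has $k = i+j$, the ratios of the moments to the value recover $k$ (and a single valid $i$) in $O(1)$ time, while a bucket with colliding contributions is detected by a numerical inconsistency between the zeroth, first, and second moments and deferred to the next round. Recorded positions are subtracted from $A \conv B$ by updating $A$ and $B$ accordingly, so the residual sparsity drops by a constant factor in expectation per round.

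The main obstacle is \emph{derandomizing} the choice of $(p, s)$ while keeping the geometric decrease of the residual sparsity. I would fix an explicit family $\mathcal{H}$ of $\polylog(n)$ affine-modular hashes with the following isolation property: for every $S \subseteq \rangezero{2n-1}$ of size $s$, a constant fraction of $h \in \mathcal{H}$ isolate a constant fraction of $S$ into singletons. Given $\mathcal{H}$, one can for each candidate $h$ estimate the number of singleton buckets in $O(s \polylog)$ time directly from the bucketed convolution and its two moments (using the numerical collision test above), and greedily commit to the hash that makes the most progress, yielding total time $O(t \polylog(n\Delta))$, where the $\log \Delta$ factor absorbs the word-size cost of arithmetic on entries bounded by $t \Delta^2$. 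The hard part, where the argument departs from the classical randomized template, is the explicit construction of $\mathcal{H}$ together with the combinatorial lemma ensuring that the greedy ``best hash'' choice provably halves the residual sparsity at every round rather than merely doing so in expectation.
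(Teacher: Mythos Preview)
The paper does not prove this statement at all: Theorem~\ref{thm:sparseconvolution} is quoted verbatim as a black-box result from~\cite{BringmannFN22} and is used only through Corollary~\ref{cor:output-sensitive-sumset}. There is therefore no ``paper's own proof'' to compare your proposal against.

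As for the proposal itself, it is a reasonable high-level sketch of the randomized template (Cole--Hariharan, Chan--Lewenstein style bucketing plus moment-based singleton detection), but the derandomization paragraph is where all the content of~\cite{BringmannFN22} lives, and you have essentially just restated the \emph{specification} of what is needed: an explicit small family $\mathcal{H}$ of hashes with a deterministic isolation guarantee, together with an argument that greedily picking the best hash gives geometric progress. You have not indicated how to construct $\mathcal{H}$ or why the greedy step works; this is precisely the nontrivial contribution of the cited paper (which goes through small-bias sets / explicit pseudorandom constructions and a careful accounting that the progress estimate can itself be computed in near-linear time without randomness). So the proposal is not a proof but a description of the problem that~\cite{BringmannFN22} solves. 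For the purposes of the present paper, though, no proof is expected here.
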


See also~\cite{BringmannFN21} for improvements in the log-factors at the cost of randomization
and \cite{ColeH02,Nakos20,GiorgiGC20} for prior randomized algorithms with similar guarantees.

\begin{corollary}[Output Sensitive Sumset Computation]\label{cor:output-sensitive-sumset}
    Given $A, B \subseteq \range{-U}{U}^2$, with $|A + B| \leq N$, $A + B$ can be computed in time $\Ohtilde(N)$.
\end{corollary}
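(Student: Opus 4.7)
The plan is to reduce the sumset computation to the sparse non-negative convolution primitive of Theorem~\ref{thm:sparseconvolution} by encoding each 2D point as a single non-negative integer via an \emph{additive} pairing function.

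First I would translate the input: define $A' := \{(a_1+U, a_2+U) : (a_1,a_2)\in A\}$ and analogously $B'$, so that $A', B' \subseteq \rangezero{2U}^2$ and thus $A'+B' \subseteq \rangezero{4U}^2$. Computing $A'+B'$ is equivalent to computing $A+B$ up to an $O(N)$ post-processing shift by $-(2U,2U)$. Next, I would define the encoding $\phi(x,y) := x\cdot(4U+1) + y$, which injectively maps $\rangezero{4U}^2$ into $\rangezero{(4U+1)^2}$. The key property is additivity: whenever $a,b \in \rangezero{2U}^2$ we have $\phi(a)+\phi(b) = \phi(a+b)$, since the sum of the second coordinates is at most $4U < 4U+1$ and therefore does not carry into the ``first-coordinate block'' of the encoding.

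I would then form $\{0,1\}$-valued vectors $P_A, P_B$ of length $(4U+1)^2$ with $P_A[\phi(a)] = 1 \iff a \in A'$ and similarly for $P_B$, and invoke Theorem~\ref{thm:sparseconvolution}. By additivity of $\phi$, the support of $P_A \conv P_B$ is exactly $\phi(A'+B')$, so the number of non-zero entries in the output is $|A'+B'| = |A+B| \leq N$. Decoding each non-zero position via $\phi^{-1}(z) = (\lfloor z/(4U+1)\rfloor,\, z \bmod (4U+1))$ and undoing the initial shift recovers $A+B$ in $O(N)$ additional time.

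For the running time, the vector length is $O(U^2)$ and the largest entry is $1$, so Theorem~\ref{thm:sparseconvolution} runs in time $O(N\,\polylog(U)) = \Ohtilde(N)$, as desired. The only point that requires a moment's care is the additivity of $\phi$ on the relevant range, which is immediate from the choice of multiplier $4U+1$; no substantive obstacle arises beyond this bookkeeping.
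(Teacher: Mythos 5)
Your proof is correct and follows essentially the same route as the paper: shift the coordinates to make them non-negative, linearize the 2D points via an additive (carry-free) pairing function ($x\cdot(4U+1)+y$ in your write-up, $(x+U)\cdot 5U+(y+U)$ in the paper), and invoke the sparse non-negative convolution of \cref{thm:sparseconvolution} on the indicator vectors. Your version just spells out the injectivity and additivity of the encoding a bit more explicitly; there is no substantive difference.
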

\begin{proof}
    Let $A' := \set{(x + U) \cdot 5U + (y + U) \mid (x, y) \in A}$ and similarly, let $B' := \set{(x + U) \cdot 5U + (y + U) \mid (x, y) \in B}$.
    Observe that this is a one-to-one embedding of $A, B \subseteq \range{-U}{U}^2$ into $A', B' \subseteq \rangezero{\Theta(U^2)}$.
    Moreover, one can check that given $C' := A' + B'$ we can infer $C := A + B$
    (the choice of $5U$ prevents any interactions between coordinates when summing them up).

    Thus, it suffices to compute $A' + B'$.
    To this end, construct their indicator vectors $P_{A'}, P_{B'} \in \Nat^{\Theta(U^2)}$ and compute the convolution
    $P_{C'} = P_{A'} \conv P_{B'}$. The non-zero entries in $P_{C'}$ correspond to the elements of $A' + B'$.
    By~\cref{thm:sparseconvolution}, this runs in time $O(|A'+B'|\polylog(N, U)) = \Ohtilde(N)$.
\end{proof}

\subsection{Algorithm}

We are ready to describe our algorithm. Recall that we have access to the functions $f, \cvx{f}, g, \cvx{g}$ and the value $\Delta = \max\set{\Delta_f, \Delta_g}$.

\paragraph{Computing $\bm{\cvx{h} = \minconv{\cvx{f}}{\cvx{g}}}$.}
Consider the pseudocode given in \cref{alg:cvx-minplus}.

\begin{algorithm}[ht]
    \caption{Given convex functions $\cvx{f} \colon \rangezero{n} \mapsto \Rat, \cvx{g} \colon \rangezero{m} \mapsto \Rat$,
    the algorithm computes $\cvx{h} = \minconv{\cvx{f}}{\cvx{g}}$.}\label{alg:cvx-minplus}
    \begin{algorithmic}[1]
    \State $i^*_0 \gets 0, \cvx{h}(0) \gets \cvx{f}(0) + \cvx{g}(0)$
    \For{$k = 1,\dots,n + m$}
        \State $i_k^* \gets \argmin\set{\cvx{f}(i) + \cvx{g}(k-i) + \tfrac{i}{2n} \mid i \in \set{i^*_{k-1}, i^*_{k-1}+1} \cap \rangezero{n}}$ \label{alg:cvx-minplus:line:witness}
        \State $\cvx{h}(k) \gets \cvx{f}(i_k^*) + \cvx{g}(k-i_k^*)$
    \EndFor
\end{algorithmic}
\end{algorithm}

\begin{lemma}
    \cref{alg:cvx-minplus} computes $\cvx{h} = \minconv{\cvx{f}}{\cvx{g}}$ in time $O(n + m)$.
\end{lemma}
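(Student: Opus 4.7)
The plan is to establish correctness through two ingredients: (a) the function $\phi_k(i) := \cvx{f}(i) + \cvx{g}(k-i)$ is convex in $i$ on its valid domain, and (b) its optimal witness $i^*_k$ is non-decreasing in $k$ and grows by at most one per step, that is, $i^*_{k-1} \leq i^*_k \leq i^*_{k-1}+1$. Convexity~(a) is immediate because both $\cvx{f}$ and the reflected map $i \mapsto \cvx{g}(k-i)$ are convex; hence the set of minimizers of $\phi_k$ forms a contiguous interval of integers, and the perturbation $i/(2n)$ in the algorithm serves only to pick a canonical representative, namely the smallest minimizer. Given~(b), a straightforward induction on $k$ shows that the sought $i^*_k$ lies in the two-element candidate set inspected in line~3, so the assignment in line~4 produces $\cvx{h}(k) = \phi_k(i^*_k) = \min_i \phi_k(i)$. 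The running time is then immediate: the main loop does $O(1)$ arithmetic work in each of the $n+m$ iterations.

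The core of the argument is the bounded-increase property~(b), which I would prove via first differences. Let $a(i) := \cvx{f}(i+1) - \cvx{f}(i)$ and $b(j) := \cvx{g}(j+1) - \cvx{g}(j)$; both sequences are non-decreasing by convexity of $\cvx{f}$ and $\cvx{g}$. A direct computation gives $\phi_k(i+1) - \phi_k(i) = a(i) - b(k-i-1)$, so the smallest minimizer $i^*_k$ is characterized as the smallest index $i$ at which $a(i) \geq b(k-i-1)$. Monotonicity follows because any $i < i^*_k$ satisfies $a(i) < b(k-i-1) \leq b(k-i)$ (using that $b$ is non-decreasing), so the analogous $(k+1)$-condition $a(i) \geq b(k-i)$ also fails and $i^*_{k+1} \geq i^*_k$. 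For the upper bound, evaluating the $(k+1)$-condition at $i = i^*_k + 1$ and using convexity of $\cvx{f}$ gives
\[
    a(i^*_k + 1) \geq a(i^*_k) \geq b(k - i^*_k - 1),
\]
which verifies that $i = i^*_k + 1$ satisfies the $(k+1)$-condition, and hence $i^*_{k+1} \leq i^*_k + 1$.

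The main obstacle I anticipate is the careful treatment of the domain boundaries, since the pseudocode only intersects candidates with $\rangezero{n}$ without explicitly enforcing $k - i \in \rangezero{m}$. The cleanest resolution is to extend $\cvx{f}$ and $\cvx{g}$ to $+\infty$ outside their defined ranges (correspondingly setting $a(n) = b(m) := +\infty$), so that candidates violating either constraint are automatically discarded by the argmin and the first-differences argument goes through uniformly at the endpoints. Apart from this bookkeeping, the proof is a direct consequence of convexity together with the inductive invariant that $i^*_k$ is the smallest minimizer of $\phi_k$.
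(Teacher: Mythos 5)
Your proof is correct, but it takes a more self-contained route than the paper for the one nontrivial step. The paper's proof simply invokes \cref{lem:monotone-paths} with $\delta = 0$: the set $P_0^-$ of minimal witnesses is a monotone path, hence $i^*_k \in \set{i^*_{k-1}, i^*_{k-1}+1}$, and the perturbed argmin over these two candidates recovers the minimal witness. You instead re-derive exactly this bounded-increase property from scratch, via the first-difference characterization of the smallest minimizer of the convex function $i \mapsto \cvx{f}(i) + \cvx{g}(k-i)$ as a threshold index for the condition $a(i) \geq b(k-i-1)$. Your argument is arguably cleaner and more elementary than the two-case analysis in the paper's \cref{lem:monotone-paths}, but it only covers the $\delta = 0$ case; the paper's lemma is stated for general $\delta$ because it is needed again later (for $R_{2\Delta}$ and its boundary paths), so citing it here costs nothing. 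A genuine merit of your write-up is that you flag the domain issue the paper glosses over: the pseudocode intersects the candidate set only with $\rangezero{n}$ and never enforces $k - i \in \rangezero{m}$, and your $+\infty$-extension convention is a clean fix. One caveat applies equally to both proofs: you assert the perturbation $i/(2n)$ ``serves only to pick a canonical representative,'' but since $\cvx{f}(i) + \cvx{g}(k-i)$ takes rational values whose nonzero differences can be as small as $1/(nm) < 1/(2n)$, one should in principle check that the perturbation cannot overturn a genuine strict inequality between the two candidates; neither your proof nor the paper's addresses this, so it is not a gap relative to the paper, but it is worth being aware of.
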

\begin{proof}
    The running time is immediate.
    To see correctness, focus on $i^*_k$ for $k \in \rangezero{n+m}$ as computed in \cref{alg:cvx-minplus}.
    We claim that the path $P_0^-$ equals $\set{(i^*_k, k - i^*_k) \mid k \in \rangezero{n + m}}$.
    That is, we want to argue that  $i^*_k$ is the minimum witness of $\cvx{h}(k)$ for each $k \in \rangezero{n + m}$.
    Indeed, by \cref{lem:monotone-paths}, $P_0^-$ is a monotone path.
    Thus, $i_k^* \in \set{i^*_{k-1}, i^*_{k-1} + 1}$.
    Observe that in \cref{alg:cvx-minplus:line:witness} we pick $i_k^*$ as the minimizer of $\cvx{f}(i) + \cvx{g}(k-i) + \tfrac{i}{2n}$ where $i \in \set{i^*_{k-1}, i^*_{k-1} + 1}$.
    Therefore, the algorithm correctly computes $i^*_k$ (the additive term $i/(2n)$ ensures that we choose the minimal $i$).
    Since $i_k^*$ is a minimum witness of $\cvx{h}(k)$, the algorithm correctly computes $\cvx{h}(k)$ for all $k \in \rangezero{n+m}$.
\end{proof}

\paragraph{Computing $\bm{h = \minconv{f}{g}}$.}
Recall that $f \colon [n] \mapsto \Int$ and  $g \colon [m] \mapsto \Int$.
As a final simplification, we argue that we can assume without loss of generality that $n = m$, and $n + 1$ is a power of 2.
To this end, let $N$ be the smallest power of 2 greater than $\max\set{n, m}$.
We pad the functions to length $N$ by setting $f(n + j) := 2j \cdot W$ for $j \in \range{1}{N-1-n}$
and $g(m + j) := 2j\cdot W$ for $j \in \range{1}{N-1-m}$, where $W$ is an integer larger than $\max_{i \in \rangezero{n}}f(i) + \max_{j \in \rangezero{m}}g(j)$.
Observe that the entries $h(0),\dots,h(n+m)$ of the result $h = \minconv{f}{g}$ are unchanged (due to the choice of sufficiently large $W$), so we can read off the original result from the result of the padded functions.
Moreover, observe that the padding does not change the parameters $\Delta_f$ and $\Delta_g$.

\smallskip
Now we can describe the algorithm.
After running~\cref{alg:cvx-minplus} we can assume that we have computed $\cvx{h}$ and the witness path $P_0^- = \set{(i^*_k, k - i^*_k) \mid k \in \rangezero{n + m}}$.
We will make use of the following subroutines:

\begin{itemize}
    \item $\textsc{Relevant}(i,j)$: returns $\cvx{f}(i) + \cvx{g}(j) \leq \cvx{h}(i+j) + 2\Delta$.
    \item $\textsc{BelowWitnessPath}(i,j)$: returns $i < i^*_{i+j}$
    \item $\textsc{AboveWitnessPath}(i,j)$: returns $i > i^*_{i+j}$
\end{itemize}

Now we can compute $h = \minconv{f,g}$ by calling $\textsc{RecMinConv}(\range{0}{n}, \range{0}{m})$.
See~\cref{alg:nearcvx-minplus} for the pseudocode.

\begin{algorithm}[ht]
    \caption{Given intervals $I = [\,i_A\,.\,.\,i_B\,], J = [\,j_A\,.\,.\,j_B\,]$, the algorithm computes
the contribution of $f[I]$ and $g[J]$ to $\minconv{f}{g}$.}\label{alg:nearcvx-minplus}
    \begin{algorithmic}[1]

    \Procedure{RecMinConv}{$I = [\,i_A\,.\,.\,i_B\,], J = [\,j_A\,.\,.\,j_B\,]$}
        \If{$\textsc{AboveWitnessPath}(i_A, j_B)$ and $\textsc{NotRelevant}(i_A, j_B)$} \Comment{Case 1}
            \State \Return $\tilde h(k) = \infty$ for all $k \in \range{i_A + j_A}{i_B+j_B}$
        \EndIf
        \If{$\textsc{BelowWitnessPath}(i_A, j_B)$ and $\textsc{NotRelevant}(i_B, j_A)$} \Comment{Case 2}
            \State \Return $\tilde h(k) = \infty$ for all $k \in \range{i_A + j_A}{i_B+j_B}$
        \EndIf
        \If{$\textsc{Relevant}(i_A, j_B)$ and $\textsc{Relevant}(i_B, j_B)$} \Comment{Case 3}
            \State Compute $C \gets \set{(i, f(i)) \mid i \in I} + \set{(j, g(j)) \mid j \in J}$ using~\cref{cor:output-sensitive-sumset} \label{alg:nearcvx-minplus:line:sumset}
            \State Infer $\tilde{h}(k) \gets \min\set{y \mid (k,y) \in C}$ for all $k \in \range{i_A + j_A}{i_B + j_B}$
            \State \Return $\tilde{h}$
        \Else \Comment{Case 4}
            \State Split $I$ into two intervals $I_1, I_2$ of equal length, similarly split $J$ into $J_1, J_2$
            \State Recursively compute $\tilde{g}_{i,j} \gets \textsc{RecMinConv}(I_i, J_j)$ for $i,j \in \set{1,2}$
            \State \Return the pointwise minimum of the functions $\tilde{g}_{i,j}$ for $i,j \in \set{1,2}$
        \EndIf
    \EndProcedure
\end{algorithmic}
\end{algorithm}

\cref{alg:nearcvx-minplus} recursively computes the contribution of $f\range{i_A}{i_B}$ and $g\range{j_A}{j_B}$ to $h = \minconv{f}{g}$.
We next discuss its four cases; see~\cref{fig:cases} for illustrations of Cases 1-3.
If $(i_A, i_B)$ is above the witness path $P_0^-$ and is not $2\Delta$-relevant (Case 1), then as we argue below no point in $I \times J$ contributes to the output $h$, so in this case we return a dummy function (which is $+\infty$ everywhere).
Case 2 is symmetric, where $(i_B, j_A)$ is above $P_0^-$ and not $2\Delta$-relevant, and we again return a dummy function.
Case 3 applies when $(i_A, j_B)$ and $(i_B, j_A)$ are both $2\Delta$-relevant.
In this case, we explicitly compute $\tilde{h} = \minconv{f\range{i_A}{i_B}}{g\range{j_A}{j_B}}$
by computing the sumset $C = \set{(i, f(i)) \mid i \in I} + \set{(j, g(j)) \mid j \in J}$ and
infering $\tilde h(k)$ as the minimum $y$ such that $(k, y) \in C$, which by definition of the sumset equals the minimum $f(i) + g(j)$ such that $i \in I, j \in J$ and $i + j = k$.
Note that this step can be done for all $k \in \range{i_A + j_A}{i_B+j_B}$ in total time $O(|C|)$ by once scanning over all elements of $C$.

Finally, if none of the above cases apply, then we split both intervals $I$ and $J$ into equal halves and recurse on all
4 combinations of halves.
We combine them by taking the pointwise minimum of all computed functions.
This case is essentially brute force.

\begin{figure}[ht]
    \begin{subfigure}[c]{0.32\textwidth}
        \centering
        \includegraphics[width=\linewidth]{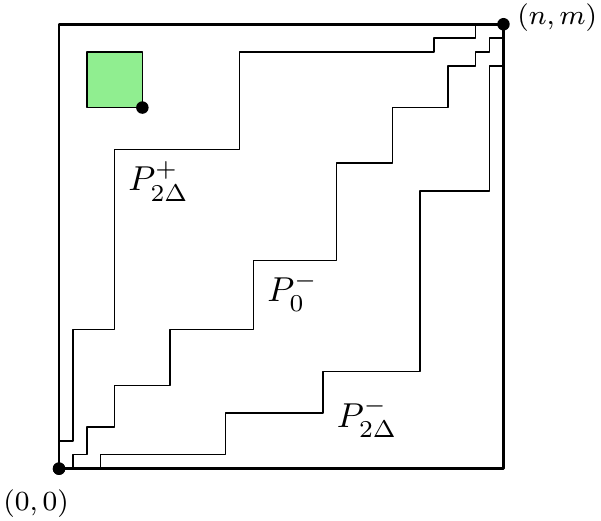}
        \caption{Case 1}
        \label{fig:subim1}
    \end{subfigure}
    \hfill
    \begin{subfigure}[c]{0.32\textwidth}
        \centering
        \includegraphics[width=\linewidth]{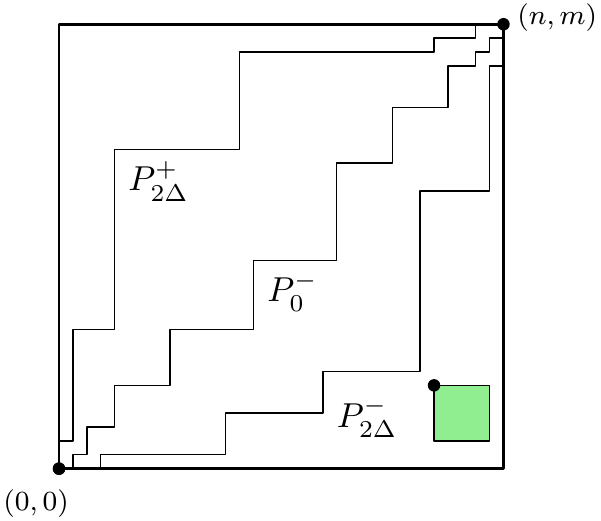}
        \caption{Case 2}
        \label{fig:subim2}
    \end{subfigure}
    \hfill
    \begin{subfigure}[c]{0.32\textwidth}
        \centering
        \includegraphics[width=\linewidth]{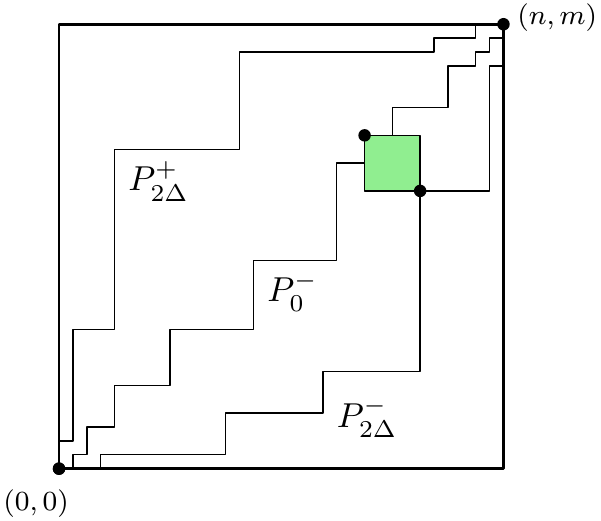}
        \caption{Case 3}
        \label{fig:subim3}
    \end{subfigure}
    \caption{Visualization of Cases 1-3 of \cref{alg:nearcvx-minplus}. The green box represents the current subproblem.}
    \label{fig:cases}
 \end{figure}

\subsection*{Correctness}
We start by analyzing the correctness of the algorithm.
\begin{lemma}[Correctness of \cref{alg:nearcvx-minplus}]
    $\textsc{RecMinConv}(\range{0}{n}, \range{0}{m})$ (\cref{alg:nearcvx-minplus}) correctly computes $h = \minconv{f}{g}$.
\end{lemma}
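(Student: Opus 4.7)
The plan is to prove by induction on the recursion tree that every call $\textsc{RecMinConv}(I, J)$ returns a function $\tilde h_{I,J}$ satisfying the invariant
\[
    \tilde h_{I,J}(k) \geq T_{I,J}(k) \quad \text{for all } k, \qquad \text{and} \qquad \tilde h_{I,J}(k) = h(k) \quad \text{whenever } T_{I,J}(k) = h(k),
\]
where $T_{I,J}(k) := \min\{f(i) + g(j) : i \in I,\ j \in J,\ i+j=k\}$ (with $\min\emptyset = \infty$). Since $T_{\rangezero{n},\rangezero{m}}(k) = h(k)$ by definition, applying the invariant to the top-level call immediately yields $\tilde h(k) = h(k)$ for every $k$.

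The heart of the argument, and the main obstacle, lies in Cases 1 and 2, where the algorithm discards an entire box by returning $\infty$. For Case 1 the preconditions are that $(i_A, j_B)$ lies above $P_0^-$ and is not $2\Delta$-relevant. Since $R_0 \subseteq R_{2\Delta}$, the $i$-coordinate of $P^-_{2\Delta}$ on each diagonal is no greater than that of $P_0^-$, so $(i_A, j_B)$ also lies above $P^-_{2\Delta}$; combined with non-relevance, \cref{lem:boundary-relevant} forces $(i_A, j_B)$ to lie strictly above $P^+_{2\Delta}$. Writing $a_k$ for the $i$-coordinate of $P^+_{2\Delta}$ on diagonal $k$, we obtain $i_A > a_{i_A+j_B}$. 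For any $(i,j) \in I \times J$, the monotonicity of $P^+_{2\Delta}$ from \cref{lem:monotone-paths} gives $a_{i+j} \leq a_{i_A+j_B} + \max(0,\,(i+j) - (i_A+j_B))$; a short case analysis on the sign of $(i+j) - (i_A+j_B)$, using $i \geq i_A$ and $j \leq j_B$, yields $a_{i+j} < i$. Hence $(i,j)$ is above $P^+_{2\Delta}$ and in particular not $2\Delta$-relevant, so by \cref{lem:relevant-points} we get $f(i) + g(j) > h(i+j)$ for every $(i,j) \in I \times J$. Therefore $T_{I,J}(k) > h(k)$ for every $k$ in range, and returning $\infty$ trivially satisfies the invariant (the ``if'' clause being vacuous). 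Case 2 is fully symmetric, using the corner $(i_B, j_A)$ and the path $P^-_{2\Delta}$ to conclude that the entire box lies below $P^-_{2\Delta}$.

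The remaining cases are routine. In Case 3, by definition of the sumset $C$ computed via \cref{cor:output-sensitive-sumset} we have $\min\{y : (k, y) \in C\} = T_{I,J}(k)$, so the invariant holds with equality. In Case 4, the four sub-rectangles $I_p \times J_q$ cover $I \times J$, so $T_{I,J}(k) = \min_{p,q} T_{I_p, J_q}(k)$. The inequality half $\tilde h_{I,J} \geq T_{I,J}$ is preserved under pointwise minimum by applying the inductive hypothesis to each of the four recursive calls. For the equality half, if $T_{I,J}(k) = h(k)$ then some $T_{I_p, J_q}(k) = h(k)$ by minimality, so the inductive hypothesis gives $\tilde g_{p,q}(k) = h(k)$, and then the pointwise minimum of the four values equals $h(k)$.

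Termination is immediate: on a $1 \times 1$ box $\{(i,j)\}$, the point is either above, on, or below $P_0^-$ and is either $2\Delta$-relevant or not, and checking these possibilities shows that one of Cases 1--3 always triggers. The entire proof therefore rests on the geometric argument of Cases 1--2, which reduces via \cref{lem:boundary-relevant} and the monotonicity of \cref{lem:monotone-paths} to the corner-propagation property of $P^+_{2\Delta}$ and $P^-_{2\Delta}$.
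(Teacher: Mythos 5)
Your proof is correct and follows essentially the same route as the paper's: the crux in both is that the tested corner of a Case 1/2 box, being non-relevant yet on the appropriate side of the witness path $P_0^-$, must lie beyond $P^{+}_{2\Delta}$ (resp.\ $P^{-}_{2\Delta}$), whence by monotonicity of these paths and \cref{lem:relevant-points} the whole box contains no witness of $h$. You merely repackage the paper's ``a witness of $h(k)$ is never in a discarded subproblem'' argument as an explicit induction with an invariant, making explicit the corner-propagation and termination steps that the paper leaves implicit.
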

\begin{proof}
    Let $k \in \rangezero{n + m}$ and consider a point $(i^*, j^*)$ in diagonal $k$ such that $f(i^*) + g(j^*) = h(k)$, i.e., a witness for $h(k)$.
    We argue that some recursive call computes $f(i^*) + g(j^*)$.
    This is clear in Case 4, as $(i^*, j^*)$ is covered by one recursive subproblem.
    It is also clear in Case 3, since then $f(i^*) + g(j^*)$ is explicitly computed.

    To finish correctness, we argue that $(i^*, j^*)$ can never be in a subproblem to which Case 1 or 2 applies.
    Recall that Case 1 applies to a subproblem $I = \range{i_A}{i_B}, J = \range{j_A}{j_B}$ if $(i_A, j_B)$ is above $P^-_0$ and $(i_A, j_B)$ is not $2\Delta$-relevant.
    Since $(i_A, j_B)$ is not $2\Delta$-relevant, by \cref{lem:boundary-relevant} $(i_A, j_B)$ must be above $P^+_{2\Delta}$ or below $P^-_{2\Delta}$.
    Since $(i_A, j_B)$ is above $P^-_0$, it can only be above $P^+_{2\Delta}$.
    Since $(i_A, j_B)$ is the lower right corner of $I \times J$, it follows that all points in $I \times J$ are above $P^+_{2\Delta}$.
    Thus, by \cref{lem:boundary-relevant} all points in $I \times J$ are not $2\Delta$-relevant.
    If we assume for the sake of contradiction that $(i^*, j^*) \in I \times J$, then \cref{lem:relevant-points} implies $f(i^*) + g(j^*) > h(k)$, contradicting the choice of $(i^*, j^*)$ as a witness for $h(k)$.
    Hence, $(i^*, j^*)$ can never be in a Case 1 subproblem.
    Case 2 is symmetric. This finishes the correctness proof.
\end{proof}

\subsection*{Running Time}

Next, we analyze the running time.
The key insight is that in relevant regions both functions are essentially linear, with the same slope (see \cref{lem:same-slope}).
This implies that the sumset computed in Case 3 is small (see \cref{lem:small-sumset}), so it can be computed efficiently using~\cref{cor:output-sensitive-sumset}.
In the following two lemmas, let $I = \range{i_A}{i_B} \subseteq \rangezero{n}$ and $J = \range{j_A}{j_B} \subseteq \rangezero{m}$ be intervals of the same length $|I| = |J|$.

\begin{restatable}[Near Linearity inside Relevant Region]{lemma}{lemnearlinearity}\label{lem:same-slope}
    If $I \times J \subseteq R_{2\Delta}$ then there are $a,b,c \in \Real$ such that
    $|f(i) - (a \cdot i + b)| \leq 2\Delta$ for all $i \in I$ and $|g(j) - (a \cdot j + c)| \leq 2\Delta$ for all $j \in J$.
\end{restatable}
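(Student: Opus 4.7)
The plan is to establish the lemma via the convex hulls $\cvx{f}, \cvx{g}$ (aiming for deviation at most $\Delta$) and then pass to $f, g$ using their $\Delta$-near convexity, which costs an extra additive $\Delta$ and yields the stated $2\Delta$. The central geometric fact is that $|I|=|J|$ (a precondition of the statement), so that the anti-diagonal $k_0 := i_A + j_B = i_B + j_A$ passes through both ``off-diagonal'' corners of the box $I \times J$, and these corners lie in $R_{2\Delta}$.

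The main step is the anti-diagonal inequality. For any inner point $(i, k_0-i) \in I \times J$, combining the relevance upper bounds at the two corners $(i_A, j_B), (i_B, j_A)$ with the trivial lower bound $\cvx{f}(i) + \cvx{g}(k_0 - i) \ge \cvx{h}(k_0)$ yields
\begin{equation*}
    \bigl(\cvx{f}(i_A) + \cvx{f}(i_B) - 2\cvx{f}(i)\bigr) + \bigl(\cvx{g}(j_A) + \cvx{g}(j_B) - 2\cvx{g}(k_0 - i)\bigr) \le 4\Delta.
\end{equation*}
Specializing to the midpoint $i = (i_A + i_B)/2$, we verify (using $|I|=|J|$) that $k_0 - i$ is the midpoint of $J$, so both bracketed terms equal twice the chord-gap of $\cvx{f}$ (resp.\ $\cvx{g}$) at its midpoint, and both are non-negative by convexity. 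Hence each midpoint chord-gap is at most $2\Delta$. Every dyadic sub-box $I' \times J' \subseteq I \times J$ inherits $I' \times J' \subseteq R_{2\Delta}$ and the symmetry $|I'| = |J'|$, so the argument recurses to bound chord-gaps at all dyadic points of $I$ and $J$.

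Next, I would take $a$ to be the common chord slope of $\cvx{f}$ on $I$. Plugging $i = i_A$ into the display gives $|(\cvx{f}(i_B) - \cvx{f}(i_A)) - (\cvx{g}(j_B) - \cvx{g}(j_A))| \le 4\Delta$, so the chord slope of $\cvx{g}$ on $J$ agrees with $a$ up to $O(\Delta / |I|)$. Choosing $b$ and $c$ to bisect the ranges of $\cvx{f}(i) - a i$ on $I$ and $\cvx{g}(j) - a j$ on $J$ respectively, the chord-gap bounds translate to $|\cvx{f}(i) - (ai + b)| \le \Delta$ for $i \in I$ and $|\cvx{g}(j) - (aj + c)| \le \Delta$ for $j \in J$. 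Finally, the triangle inequality against $|f - \cvx{f}|, |g - \cvx{g}| \le \Delta$ (which holds by $\Delta$-near convexity) gives the lemma.

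The main obstacle is extracting a clean $O(\Delta)$ bound from the dyadic midpoint recursion rather than an $O(\Delta \log |I|)$ bound: each level of the recursion contributes an additional chord-gap summand, and avoiding a logarithmic loss requires telescoping using the observation that the chord at recursion level $k$ differs from the chord at level $0$ by at most $2^{-k}$ times the level-$0$ chord-gap at the midpoint, so the contributions across levels form a geometric series. It is essential that $|I|=|J|$, both so that the anti-diagonal passes through both off-diagonal corners and so that the dyadic recursion preserves symmetric sub-boxes.
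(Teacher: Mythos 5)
Your overall strategy matches the paper's in spirit: both work on the anti-diagonal $k_0 = i_A + j_B = i_B + j_A$ (using $|I|=|J|$), compare $\cvx{f},\cvx{g}$ to their chords on $I,J$, extract a common slope, and finish via $|f-\cvx{f}|,|g-\cvx{g}|\le\Delta$. However, your dyadic midpoint recursion has a genuine gap: the telescoping claim that ``the chord at recursion level $k$ differs from the chord at level $0$ by at most $2^{-k}$ times the level-$0$ chord-gap at the midpoint'' is false. Write $F$ for the level-$0$ chord of $\cvx{f}$ on $I$ and $\phi := F - \cvx{f} \ge 0$. The level-$1$ chord on $[i_A, m_0]$ passes through $(m_0, F(m_0) - \phi(m_0))$, so it deviates from $F$ by the full $\phi(m_0)$ at its right endpoint (and by $\phi(m_0)/2$ at the new midpoint — no geometric decay). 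More generally, if $[u,v]$ is a level-$k$ interval with midpoint $m$, the level-$k$ chord at $m$ equals $F(m) - (\phi(u)+\phi(v))/2$, so the sub-box midpoint bound only yields $\phi(m) \le 2\Delta + (\phi(u)+\phi(v))/2$; at deeper levels both endpoints $u,v$ can already carry gaps of order the running maximum, and this recurrence grows linearly in the level. So your argument as written only gives $\phi = O(\Delta\log|I|)$, which is exactly the loss you were trying to avoid.

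The repair is available inside your own setup. The cleanest (and the paper's) route: your anti-diagonal inequality need not be specialized to the midpoint, because \emph{every} point $(i, k_0-i)$ with $i\in I$ lies in $I\times J \subseteq R_{2\Delta}$; hence $\cvx{f}(i)+\cvx{g}(k_0-i)\le \cvx{h}(k_0)+2\Delta$ for all $i\in I$ simultaneously. Combining this with $\cvx{g}(k_0-i)\le G(k_0-i)$ (the chord of $\cvx{g}$ on $J$) and with $F(i)+G(k_0-i)\le\cvx{h}(k_0)+2\Delta$ (true at the two corners, hence everywhere by affineness) gives $F(i)-2\Delta\le\cvx{f}(i)\le F(i)$ in one step, with no recursion; the analogous computation for $\cvx{g}$, rewritten via $F(k_0-j)=F(k_0)-F(j)+F(0)$, gives the same slope $a$ exactly. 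Alternatively, keep only your level-$0$ bound $\phi(m_0)\le 2\Delta$ and use that $\phi$ is concave with $\phi(i_A)=\phi(i_B)=0$: for $i$ in the left half, $m_0=\lambda i+(1-\lambda)i_B$ with $\lambda\ge 1/2$, so $\phi(i)\le\phi(m_0)/\lambda\le 4\Delta$. Two minor further remarks: the second repair, together with your range-bisecting choice of $b,c$ and your slope-matching step (which costs another additive $O(\Delta)$ when transporting $a$ from $I$ to $J$), proves the lemma only with a constant larger than the stated $2\Delta$ — harmless for the application, which needs only $O(\Delta)$, but not literally the claimed bound; and the midpoints in your recursion are half-integers when $|I|$ is a power of two, a technicality the non-recursive arguments avoid.
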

\begin{proof}
    Consider the linear interpolation between $(i_A, \cvx{f}(i_A))$ and $(i_B, \cvx{f}(i_B))$:
    \[
        F(x) := \frac{(i_B - x) \cvx{f}(i_A) + (x - i_A)\cvx{f}(i_B)}{i_B - i_A}.
    \]
    Similarly, consider \[G(x) := \frac{(j_B - x) \cvx{g}(j_A) + (x - j_A)\cvx{g}(j_B)}{j_B - j_A}.\]
    By convexity of $\cvx{f}$ and $\cvx{g}$, we have
    \begin{equation}\label{proof:same-slope:eqn:0}
        \cvx{f}(i) \leq F(i) \quad \forall i \in I, \quad  \quad \cvx{g}(j) \leq G(j) \quad \forall j \in J.
    \end{equation}
    Consider the diagonal $k := i_A + j_B$ and note that for all $i \in I$ we have $k - i \in J$ due to $|I| = |J|$.
    Thus, for each $i \in I$ the point $(i, k-i)$ is $2\Delta$-relevant, and we obtain
    \begin{equation}\label{proof:same-slope:eqn:1}
        \cvx{h}(k) \leq \cvx{f}(i) + \cvx{g}(k-i) \leq \cvx{h}(k) + 2\Delta \quad \forall i \in I
    \end{equation}
    This implies
    \begin{equation}\label{proof:same-slope:eqn:2}
        \cvx{h}(k) \leq F(i) + G(k-i) \leq \cvx{h}(k) + 2\Delta \quad \forall i \in I,
    \end{equation}
    since by \eqref{proof:same-slope:eqn:1} these inequalities hold for $i \in \set{i_A, i_B}$ and by the linear interpolation, they also hold in between.

    Now for any $i \in I$ we have
    \[
        \cvx{f}(i)
            \stackrel{\eqref{proof:same-slope:eqn:1}}{\geq} \cvx{h}(k) - \cvx{g}(k-i)
            \stackrel{\eqref{proof:same-slope:eqn:0}}{\geq} \cvx{h}(k) - G(k-i)
            \stackrel{\eqref{proof:same-slope:eqn:2}}{\geq} \cvx{h}(k) - (\cvx{h}(k) + 2\Delta - F(i)) = F(i) - 2\Delta.
    \]
    Thus, $\cvx{f}(i) \in \range{F(i) - 2\Delta}{F(i)}$, and by
    $\cvx{f} \leq f \leq \cvx{f} + \Delta_f \leq \cvx{f} + \Delta$, we obtain that $|f(i) - F(i)| \leq 2\Delta$.

    For $\cvx{g}(j)$ for any $j \in J$ we bound
    \[
        \cvx{g}(j)
            \stackrel{\eqref{proof:same-slope:eqn:1}}{\geq} \cvx{h}(k) - \cvx{f}(k-j)
            \stackrel{\eqref{proof:same-slope:eqn:0}}{\geq} \cvx{h}(k) - F(k-j),
    \]
    and
    \[
        \cvx{g}(j)
            \stackrel{\eqref{proof:same-slope:eqn:0}}{\leq} G(j)
            \stackrel{\eqref{proof:same-slope:eqn:2}}{\leq} \cvx{h}(k) - F(k-j) + 2\Delta.
    \]
    Therefore, $|\cvx{g}(j) - (\cvx{h}(k) - F(k-j) + \Delta)| \leq \Delta$.
    By linearity of $F$ we can write $F(k - j) = F(k) - F(j) + F(0)$.
    This yields $|\cvx{g}(j) - (F(j) + \lambda)| \leq \Delta$ for $\lambda := \cvx{h}(k) - F(k) - F(0) + \Delta$.
    Since $|g(j) - \cvx{g}(j)| \leq \Delta_g \leq \Delta$ we obtain $|g(j) - (F(j) + \lambda)| \leq 2\Delta$.
    Since $F$ is linear, writing $F(i) = a \cdot i + b$ and $F(j) + \lambda = a \cdot j + c$ finishes the proof.
\end{proof}

\begin{lemma}[Relevant Regions have Small Sumsets]\label{lem:small-sumset}
    If $I \times J \subseteq R_{2\Delta}$ then the sumset $\set{(i, f(i)) \mid i \in I} + \set{(j, f(j)) \mid j \in J}$ has size $O(\Delta \cdot (|I| + |J|))$.
\end{lemma}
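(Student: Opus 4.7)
The plan is to combine the near-linearity lemma (\cref{lem:same-slope}) with the crucial fact that both $f$ and $g$ are approximated by linear functions sharing the \emph{same slope} $a$. Once this is in place, the sumset will collapse to $O(\Delta)$ distinct values per diagonal, and summing over all diagonals gives the claimed bound.

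More concretely, I would first invoke \cref{lem:same-slope} to obtain real numbers $a,b,c$ and integer error functions $\varepsilon_i := f(i) - (ai + b)$ for $i \in I$ and $\eta_j := g(j) - (aj + c)$ for $j \in J$ with $|\varepsilon_i|, |\eta_j| \leq 2\Delta$. Then an element of the sumset takes the form
\[
(i+j,\ f(i)+g(j)) \;=\; \bigl(i+j,\ a(i+j) + (b+c) + \varepsilon_i + \eta_j\bigr).
\]
Fix the first coordinate $k \in \range{i_A+j_A}{i_B+j_B}$. Then the second coordinate ranges over the set $\{a\cdot k + (b+c) + \varepsilon_i + \eta_{k-i}\}$ as $i$ varies. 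Since $f$ and $g$ are integer-valued, for a fixed $k$ any two values of the second coordinate differ by an integer; moreover each differs from $a\cdot k + (b+c)$ by at most $4\Delta$ in absolute value. Hence on each diagonal the second coordinate takes at most $8\Delta + 1 = O(\Delta)$ distinct values.

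Finally, summing over the $|I| + |J| - 1$ possible diagonals $k$ yields a total of $O\bigl(\Delta \cdot (|I|+|J|)\bigr)$ points in the sumset, which is the claimed bound. The only mild subtlety is noting that the $\varepsilon_i + \eta_j$ are not themselves integers (since $a,b,c$ may not be), but that their differences across a common diagonal \emph{are} integers because $f(i)+g(j)$ is integer-valued; this suffices to bound the number of distinct values per diagonal by the length of the integer interval $\range{-4\Delta}{4\Delta}$. No further technical obstacle arises.
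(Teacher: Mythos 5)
Your proof is correct and follows essentially the same route as the paper: apply \cref{lem:same-slope} to get shared slope $a$, observe that on each of the $|I|+|J|-1$ diagonals the $y$-coordinate lies in an interval of length $O(\Delta)$ and is integer-valued, hence takes $O(\Delta)$ distinct values. Your explicit handling of the integrality point (that the values, not the errors, are integers) is a detail the paper leaves implicit.
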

\begin{proof}
    By \cref{lem:same-slope}, for any $(i,j) \in I\times J$ with $i + j = k$ we have
    \[ f(i) + g(j) = (a\cdot i + b) + (a \cdot j + c) \pm O(\Delta) = a\cdot k + b + c \pm O(\Delta).\]
    Thus, for each of the $|I| + |J| - 1$ $x$-coordinates (i.e., choices of $i + j$), there are $O(\Delta)$ different $y$-coordinates (i.e., values $f(i) + g(j)$) in the sumset.
\end{proof}

\begin{lemma}[Running Time of \cref{alg:nearcvx-minplus}]\label{lem:minplusrunningtime}
    $\textsc{RecMinConv}(\range{0}{n}, \range{0}{m})$ (\cref{alg:nearcvx-minplus}) runs in time $\Ohtilde(n\Delta)$.
\end{lemma}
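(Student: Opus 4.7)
The plan is to analyze the recursion tree of $\textsc{RecMinConv}$ level by level. Since we padded the inputs so that $n$ is a power of $2$ and each Case~4 split halves both intervals $I$ and $J$, the recursion has depth $O(\log n)$ and every subproblem at level $\ell$ concerns square boxes of side $n/2^\ell$. The per-box work breaks down by case: Cases~1 and~2 cost $O(n/2^\ell)$ to emit the $+\infty$ dummy (or even $O(1)$ with implicit representation), Case~4 pays only $O(1)$ for the split itself, and Case~3 dominates, requiring an output-sensitive sumset computation.

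For a Case~3 subproblem, the first step is to verify that the case condition indeed forces $I \times J \subseteq R_{2\Delta}$, so that the hypotheses of \cref{lem:same-slope,lem:small-sumset} are met. This follows from \cref{lem:boundary-relevant} together with the convexity of $r(i) := \cvx{f}(i) + \cvx{g}(k-i)$ along each anti-diagonal, which propagates the \textsc{Relevant} predicate from the extreme corners of the (square) box to all its interior points. Granting this, \cref{lem:small-sumset} bounds the sumset $\{(i, f(i)) : i \in I\} + \{(j, g(j)) : j \in J\}$ by $O(\Delta \cdot n/2^\ell)$ entries, and therefore \cref{cor:output-sensitive-sumset} computes it in time $\Ohtilde(\Delta \cdot n/2^\ell)$, which is also the time needed to scan the resulting set and extract the minima $\tilde{h}(k)$.

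The heart of the argument is then a geometric counting bound: at each level $\ell$ only $O(2^\ell)$ boxes survive to reach Case~3 or Case~4. A box that escapes both Cases~1 and~2 must be pierced by the witness path $P^-_0$ along its middle anti-diagonal, or else it must straddle one of the monotone boundary paths $P^+_{2\Delta}, P^-_{2\Delta}$ of \cref{lem:monotone-paths,lem:boundary-relevant}. Each of these monotone paths spans the grid from $(0,0)$ to $(n,n)$ with total length at most $2n$, so it can intersect at most $O(n/(n/2^\ell)) = O(2^\ell)$ axis-aligned boxes of side $n/2^\ell$; summing over the three paths yields the claim. I expect this step to be the main obstacle, since it requires a careful case analysis relating the \textsc{Relevant} and \textsc{Above/BelowWitnessPath} predicates at the corners of a surviving box to the monotone boundary paths. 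Given the bound, the total work at level $\ell$ is $O(2^\ell) \cdot \Ohtilde(\Delta \cdot n/2^\ell) = \Ohtilde(n\Delta)$, and summing over the $O(\log n)$ levels yields the claimed running time $\Ohtilde(n\Delta)$.
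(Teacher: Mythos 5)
Your overall plan---a level-by-level accounting, cost $\Ohtilde(\Delta s)$ per Case~3 box of side $s$ via \cref{lem:small-sumset} and \cref{cor:output-sensitive-sumset}, and an $O(2^\ell)$ bound on the number of boxes handled at level $\ell$---has the right shape and matches the paper. The gap is in the counting step, and it is a genuine one: it is \emph{not} true that a box escaping Cases~1 and~2 must meet one of the three paths $P^-_0, P^+_{2\Delta}, P^-_{2\Delta}$. A box lying strictly between $P^-_{2\Delta}$ and $P^+_{2\Delta}$ (and, say, strictly above $P^-_0$) satisfies the Case~3 test while touching none of the three paths. In the extreme case where $\Delta$ is so large that every point is $2\Delta$-relevant, \emph{every} one of the $4^\ell$ boxes at level $\ell$ passes your filter, so the static count of boxes satisfying ``Case~3 or Case~4'' can be $4^\ell$ rather than $O(2^\ell)$. (The algorithm is still fast on that instance---the root is already Case~3 and the recursion never descends---but that is precisely the point: the $O(2^\ell)$ bound cannot come from a static count of boxes meeting the case conditions; it must use the recursion structure.)

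The fix is to count the boxes that are actually \emph{visited}: a level-$\ell$ box is visited only if its parent is a Case~4 subproblem, and a short case analysis of the predicates at the corners (the one you anticipated as the main obstacle) shows that every Case~4 box contains, on its middle anti-diagonal, a point of $P^+_{2\Delta}$ or $P^-_{2\Delta}$. Since a monotone path passes through only $O(n/s)$ boxes of side $s$ in total, there are $O(2^{\ell-1})$ Case~4 boxes at level $\ell-1$, hence $O(2^\ell)$ visited boxes at level $\ell$, and the per-level cost $O(2^\ell)\cdot\Ohtilde(\Delta n/2^\ell)=\Ohtilde(n\Delta)$ follows as you computed. The paper organizes the same idea per diagonal of boxes: at most two boxes per box-diagonal can be Case~4 (they are crossed by a boundary path), and at most two per box-diagonal can appear as Case~3 subproblems, because if three did, the parent of the middle one would already lie in $R_{2\Delta}$ and would have been resolved as Case~3, so the middle box would never have been generated. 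The remaining ingredients of your write-up---deducing $I\times J\subseteq R_{2\Delta}$ from relevance of the two middle-diagonal corners via \cref{lem:boundary-relevant} and monotonicity, the sumset size bound, and the summation over $O(\log n)$ levels---are correct.
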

\begin{proof}
    We first analyze the running time of one recursive subproblem, ignoring the cost of recursive calls.
    Note that in Cases 1 and 2 it suffices to return a dummy value, i.e., we do not need to iterate over $k \in \range{i_A+j_A}{i_B+j_B}$ to
    explitly return $\tilde{h}(k) = \infty$. Thus, Cases 1 and 2 run in time $O(1)$.
    We charge this time to the parent of the current subproblem, which is a Case 4-subproblem.

    Consider Case 4.
    Ignoring the cost of the recursive subproblems, Case 4 runs in time $O(1)$, which also covers the charging from children which fall in Cases 1 and 2.

    Consider Case 3, and let $s := i_B - i_A + 1 = j_B - j_A + 1$ be the current side length.
    By~\cref{lem:small-sumset}, the sumset computed in \cref{alg:nearcvx-minplus:line:sumset} has size $O(\Delta s)$.
    Thus, it can be computed in time $\Ohtilde(\Delta s)$ using \cref{cor:output-sensitive-sumset}, and the function $\tilde{h}$ can be inferred from it in time $O(\Delta s)$.

    Now we bound the total running time across subproblems.
    Fix a side length $s$ and consider all possible subproblems of side length $s$, i.e., all boxes
    \[
      B^s_{x,y} := \range{x \cdot s}{x \cdot s + s - 1} \times \range{y \cdot s}{y \cdot s + s - 1}, \text{ where } x,y \in \rangezero{n/s}.
    \]
    Consider a diagonal $D_{s,d} := \set{B^s_{x, x+d} \mid x \in \rangezero{n/s}}$ of these boxes, see \cref{fig:diagonal-boxes}.
    Note that a box in $D_{s,d}$ that lies fully above $P_{\Delta}^+$ corresponds to a Case 1-subproblem.
    A box in $D_{s,d}$ that lies fully below corresponds to a Case 2-subproblem.
    A box that is below or on $P_{2\Delta}^+$ and above or on $P_{2\Delta}^-$ corresponds to a Case 3-subproblem.
    The remaining boxes intersect $P_{2\Delta}^+$ or $P_{2\Delta}^-$ and correspond to Case 4.

    Note that by monotonicity of $P_{2\Delta}^+, P_{2\Delta}^-$, at most two boxes in $D_{s,d}$ are intersected by $P_{2\Delta}^+$ or $P_{2\Delta}^-$ and thus at most two boxes in $D_{s,d}$ can appear as Case 4-subproblems.
    Thus, Case 4 incurs time $O(1)$ per diagonal.
    We argue that among the boxes in $D_{s,d}$, at most two can appear as Case 3-subproblems.
    Indeed, if these would be at least three such boxes, then the parent of the middle box would also be between
    $P^+_{2\Delta}$ and $P^-_{2\Delta}$, and thus the parent would already be a Case 3-subproblem, see \cref{fig:3boxes,fig:parentbox}.
    Thus, the middle box would not get split, and it would not become a recursive subproblem.
    Hence per diagonal $D_{s,d}$, Case 3 incurs time $\Ohtilde(\Delta s)$ for each of at most two boxes.

    It remains to sum up over all side lengths $1 \leq s \leq n$ where $s = 2^\ell$ is a power of 2 (recall that at each recursive level we split the side length in two equal parts), and over all $O(n/s)$ diagonals $d$, to obtain total time
    $\sum_{\ell = 1}^{\log n} O(n/2^\ell) \cdot \Ohtilde(\Delta 2^\ell) = \Ohtilde(\Delta n)$.
    Note that the sum over $\ell$ only adds another log-factor, which is hidden by the $\Ohtilde$-notation.
\end{proof}

\begin{figure}[ht]
    \begin{subfigure}[c]{0.32\textwidth}
        \centering
        \includegraphics[width=\linewidth]{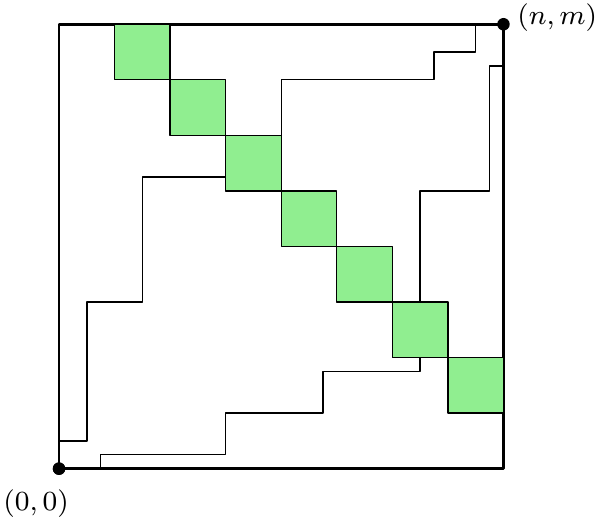}
        \caption{A diagonal of boxes $D_{s,d}$}
        \label{fig:diagonal-boxes}
    \end{subfigure}
    \hfill
    \begin{subfigure}[c]{0.32\textwidth}
        \centering
        \includegraphics[width=\linewidth]{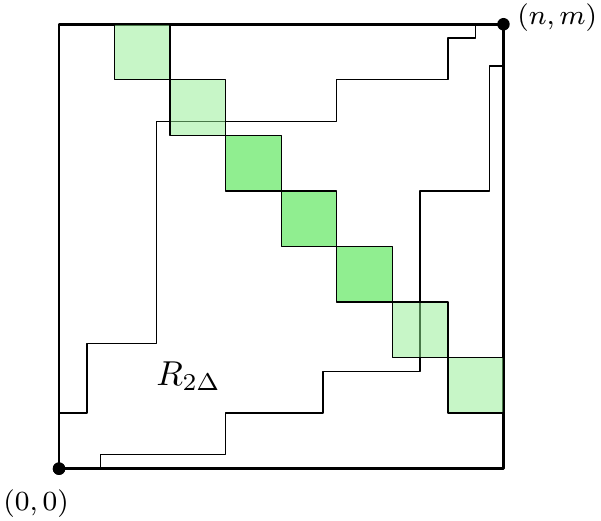}
        \caption{Three boxes inside $R_{2\Delta}$}
        \label{fig:3boxes}
    \end{subfigure}
    \hfill
    \begin{subfigure}[c]{0.32\textwidth}
        \centering
        \includegraphics[width=\linewidth]{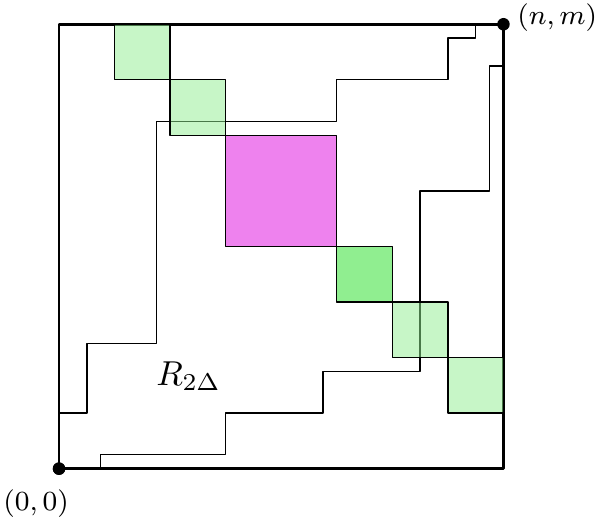}
        \caption{The parent box is already contained in $R_{2\Delta}$}
        \label{fig:parentbox}
    \end{subfigure}
    \caption{Visualizations for the proof of~\cref{lem:minplusrunningtime}.}
    \label{fig:boxes-runningtime}
\end{figure}

\bibliographystyle{halpha}
\bibliography{refs}

\end{document}